\newtheorem{theorem}{Theorem}
\newtheorem{lemma}{Observation}
\newtheorem{definition}{Definition}
\newtheorem{definitionApp}{Definition A\!\!}
\newtheorem{lem}{Lemma}
\newtheorem{corol}{Corollary}
\def\>{\rangle}
\def\<{\langle}
\def\H{ {\cal H} }
\def\Sc{ {\cal S} }
\def\B{ {\cal B} }
\newcommand{\bra}[1]{\langle {#1} |}
\newcommand{\ket}[1]{| {#1} \rangle}
\newcommand{\ketbra}[2]{\ensuremath{\left|#1\right\rangle\!\!\left\langle#2\right|}}
\newcommand{\braket}[2]{\ensuremath{\left\langle#1\right|\left.#2\right\rangle}}
\newcommand{\tr}[2]{\mathrm{Tr}_{#1}\left( #2 \right)}
\newcommand{\iden}{\mathbb{I}}
\renewcommand{\v}[1]{\ensuremath{\boldsymbol #1}}
\begin{document}

\title{Classical noise and the structure of minimal uncertainty states}
\author{Kamil Korzekwa}
\affiliation{Department of Physics, Imperial College London, London SW7 2AZ, United Kingdom}
\author{Matteo Lostaglio}
\affiliation{Department of Physics, Imperial College London, London SW7 2AZ, United Kingdom}

\begin{abstract}
Which quantum states minimise the unavoidable uncertainty arising from the non-commutativity of two observables? The immediate answer to such a question is: it depends. Due to the plethora of uncertainty measures there are many answers. Here, instead of restricting our study to a particular measure, we present plausible axioms for the set $\mathcal{F}$ of bona-fide information-theoretic uncertainty functions. Then, we discuss the existence of states minimising uncertainty with respect to all members of $\mathcal{F}$, i.e., universal minimum uncertainty states (MUS). We prove that such states do not exist within the full state space and study the effect of classical noise on the structure of minimum uncertainty states. We present an explicit example of a qubit universal MUS that arises when purity is constrained by introducing a threshold amount of noise. For higher dimensional systems we derive several no-go results limiting the existence of noisy universal MUS. However, we conjecture that universality may emerge in an approximate sense. We conclude by discussing connections with thermodynamics, and highlight the privileged role that non-equilibrium free energy $F_2$ plays close to equilibrium.
\end{abstract}

\pacs{03.65.-w, 03.65.Aa, 03.67.-a, 05.70.-a, 89.70.Cf}

\maketitle

\section{Introduction}

Since the advent of quantum mechanics uncertainty relations have played a major role in uncovering the mysteries of the quantum realm. Originally introduced by Heisenberg as an error-disturbance relation in his famous thought experiment~\cite{heisenberg1949physical}, they have been reformulated and refined in various ways over the last 90 years. Starting as a statement about the outcome statistics of independent measurements of position and momentum~\cite{kennard1927quantenmechanik}, they were quickly extended to generic observables~\cite{robertson1929uncertainty}. Then, almost half a century after the original formulation, a substantial change in the paradigm came with David Deutsch's proposal of using the Shannon entropy of the outcomes statistics to study uncertainty relations~\cite{deutsch1983uncertainty}. This lead to state-independent lower bounds on uncertainty~\cite{maassen1988generalized}, but also initiated the ongoing intimate relationship between uncertainty relations and the field of information theory. Since then many works used multiple entropic measures to quantify uncertainty (see Ref.~\cite{wehner2010entropic,coles2015entropic} and references therein). Although they shed light on new aspects of uncertainty relations, most of these results depend on the particular choice of the uncertainty measure. A recent ``universal'' approach~\cite{partovi2011majorization,friedland2013universal,puchala2013majorization,rudnicki2014strong} tries to go beyond this limitation, by making statements that are independent of the particular measure of uncertainty being used.

The authors of Ref.~\cite{friedland2013universal} proposed a minimal requirement that all valid uncertainty functions should satisfy: the act of forgetting information about a random variable cannot decrease its uncertainty. This approach restricts uncertainty functions to the family of Schur-concave functions. Here, we also analyse the effect of two additional requirements: the additivity of uncertainty for independent random variables and continuity. Adding these further restrictions one after the other gives two more frameworks for studying uncertainty. These allow us to restrict the set of all uncertainty measures to the one-parameter family of R{\'e}nyi entropies $H_\alpha$ with parameter $\alpha\in\mathbb{R}$ or $\alpha\in\mathbb{R}_+$.

Within these three frameworks we study the structure of quantum states that minimise uncertainty, the so-called minimum uncertainty states (MUS). In the spirit of the universal approach we ask: what quantum states - if any - simultaneously minimise all possible uncertainty measures? The evidence we bring in this paper suggests that, excluding the qubit case, such universal MUS do not exist. Nevertheless, we clarify how the structure of MUS simplifies with the introduction of classical uniform noise and suggest that an approximate notion of universal MUS may emerge. Finally, we link our results on the measures of uncertainty with the problem of quantifying the departure of a system from thermodynamic equilibrium. In particular, we point out how a so-far neglected measure of non-equilibrium, the free energy functional $F_2$ (defined by the R{\'e}nyi divergence of order two), plays a crucial role in near-equilibrium thermodynamic transformations.

\section{General families of uncertainty measures}
\label{sec:generalFamily}

Intuitively, an uncertainty measure $u$ is a function that assigns a real positive number to every probability distribution $\v{p}$, reflecting the ``spread'' of  $\v{p}$. However, there is no unique way of measuring the uncertainty of a probability distribution; quite the contrary, there exists a plethora of different information-theoretic functions~\cite{renyi1961measures}. This is linked to the fact that there are different ways of assessing uncertainty and making bets, depending on the rules of the probabilistic game being played. For example, making a bet on a single event is very different from making bets on many, independent and identically distributed ones. In the former case one would look at a single-shot entropy, whereas in the latter one may choose the Shannon entropy. Also, depending on the stake, one may want to follow a very risk-adverse strategy (and, e.g., look at the Hartley entropy $H_0$) or, on the contrary, be risk-prone (and, e.g., look at the min-entropy $H_\infty$). This is reflected by different choices of the relevant uncertainty functions, as each of them captures a different aspect of the ``spread'' of $\v{p}$. However one can ask: what is the set of all possible uncertainty functions? 

The basic idea is that all uncertainty functions must satisfy some elementary requirements; e.g., all of them should assign zero uncertainty to the sharp probability distribution \mbox{$\v{p}=(1,0,\dots 0)$}. In what follows we will describe and motivate conditions defining general families of bona-fide uncertainty functions. In this paper we will call a probability distribution $\v{p}$ universally less uncertain than $\v{q}$ -- according to some chosen set of uncertainty measures $\mathcal{F}$ -- if $u(\v{p}) \leq u(\v{q})$ for all $u\in\mathcal{F}$.

\subsection{Minimal requirement of Schur-concavity}

Recently, a general condition has been proposed that a function $u$ should satisfy in order to measure uncertainty~\cite{friedland2013universal}. It is given by
\begin{equation}
\label{eq:axiom}
   u(\v{p})\leq u(\lambda\v{p}+(1-\lambda)\Pi\v{p}) \quad \textrm{for~} \lambda\in[0,1],
\end{equation}
where $\Pi$ is any permutation of the probability vector. In other words, a random relabelling of a probability distribution cannot decrease the uncertainty. Notice that since permutations are reversible, this immediately implies that any $u$ must be a function of the probability vector only and not of the way we label events, i.e., $u(\v{p})=u(\Pi\v{p})$. This is in accordance with a much older concept, introduced by Deutsch~\cite{deutsch1983uncertainty}, that an information-theoretic measure of uncertainty for a given observable should not depend on its eigenvalues.

As Birkchoff's theorem states that the convex hull of permutation matrices is given by the set of bistochastic matrices $\{\Lambda\}$~\cite{bhatia1997matrix}, the above axiom is equivalent to
\begin{equation}
\label{eq:maj_order} 
u(\v{p})\leq u(\Lambda\v{p})\mathrm{~for~all~bistochastic~ matrices~}\Lambda.
\end{equation}
Notice~\cite{bhatia1997matrix} that $\v{q}= \Lambda \v{p}$ if and only if $\v{p}$ majorises $\v{q}$, $\v{p} \succ \v{q}$ (we recall the definition of majorisation in Appendix~A). Therefore, functions satisfying Eq.~\eqref{eq:maj_order}, the Shannon entropy being the best known example, are Schur-concave. We shall denote this set by $\Sc$. Hence, the condition given by Eq.~\eqref{eq:axiom} specifies that a minimal requirement for $u$ to be a bona-fide uncertainty function is to be Schur-concave. In Ref.~\cite{friedland2013universal} no further properties are imposed, i.e, it is assumed that actually any $u\in\Sc$ can be considered as a meaningful uncertainty function. Thus, within this approach a probability distribution $\v{p}$ is universally less uncertain than $\v{q}$ if and only if $\v{p} \succ \v{q}$.

\subsection{Enforcing context-independence restricts to R{\'e}nyi entropies}
\label{sec:restricting}
 
In this paper we note that not all Schur-concave functions may be appropriate uncertainty measures, as some of them possess potentially undesired properties. In particular, one can show that there exist probability distributions $\v{p}$, $\v{q}$ and $\v{r}$ such that
\begin{enumerate}
\item \mbox{$\exists u\in\Sc: u(\v{p})>u(\v{q})$},
\item \mbox{$\forall u\in\Sc: u(\v{p}\otimes \v{r})\leq u(\v{q}\otimes\v{r})$}.
\end{enumerate}
This is a simple consequence of the phenomenon of catalysis~\cite{jonathan1999entanglement}. Therefore, allowing any Schur-concave function to measure uncertainty leads to the existence of a measure $u$ according to which $\v{p}$ is more uncertain than $\v{q}$, but \mbox{$\v{p}\otimes\v{r}$} is less uncertain than \mbox{$\v{q}\otimes\v{r}$}.\footnote{As a particular example one can take: $\v{p}=(0.5,0.25,0.25,0)$, $\v{q}=(0.4,0.4,0.1,0.1)$, $\v{r}=(0.6,0.4)$ and $u$ to be the sum of two smallest elements of a probability vector \cite{jonathan1999entanglement}.} As a result, the uncertainty functions are allowed to be context-dependent, i.e., an independent random variable $\v{r}$ can change our assessment of which of two probability distributions, $\v{p}$ or $\v{q}$, is more uncertain. 
Here we will be interested in uncertainty functions that are context-independent, in the sense that independent events do not affect the uncertainty ordering between probability distributions. 

In order to remove context-dependence we propose a single and natural additional assumption restricting the set of allowed measures of uncertainty. We require that all bona-fide measures of uncertainty should not only be Schur-concave, but also additive:
\begin{equation*}
u(\v{p}\otimes\v{q})=u(\v{p})+u(\v{q}).
\end{equation*}
The above condition reflects the extensiveness of uncertainty for independent events, a standard assumption for information and uncertainty measures~\cite{renyi1961measures}. Thus, we define the general family of uncertainty functions by the set of additive Schur-concave functions and denote it by $\mathcal{U}$. It is straightforward to check that by getting rid of non-additive functions the problem of context-dependence is solved. Indeed, due to additivity, for any $u \in \mathcal{U}$ we have
\begin{equation*}
u(\v{p})> u(\v{q}) \Leftrightarrow u(\v{p}\otimes\v{r}) > u(\v{q}\otimes\v{r}) \quad \forall \v{r}.
\end{equation*}
As before we can ask when one random variable is universally less uncertain than another. The answer is that if $\v{p}$ is not simply a permutation of $\v{q}$ this is the case if and only if
\begin{equation}\label{eq:renyi_order}
H_\alpha(\v{p})< H_\alpha(\v{q}) \quad \forall \alpha\in\mathbb{R},
\end{equation}
where $H_\alpha$ are the well-known R{\'e}nyi entropies, first introduced by R{\'e}nyi in Ref.~\cite{renyi1961measures} as general measures of information (and, hence, of uncertainty). We recall their definition in Appendix~A. The fact that Eq.~\eqref{eq:renyi_order} implies the same inequality for all $u \in \mathcal{U}$ is non-trivial and it is a consequence of the results of Refs.~\cite{klimesh2007inequalities,turgut2007catalytic} that show the equivalence between Eq.~\eqref{eq:renyi_order} and the trumping relation~$\succ_T$. A probability distribution $\v{p}$ is said to ``trump'' $\v{q}$ if there exists a context in which $\v{p}$ majorises $\v{q}$ (see Appendix~A for the definition). From such equivalence the result follows immediately. Therefore, choosing $\mathcal{U}$ as the set of uncertainty functions, we can alternatively say that $\v{p}$ is universally less uncertain than $\v{q}$ if and only if $\v{p}$ trumps $\v{q}$, $\v{p} \succ_T \v{q}$.

\subsection{Enforcing decidability restricts to R{\'e}nyi entropies of positive order}
\label{sec:restricting2}

We will now show that allowing R\'{e}nyi entropies of non-positive order $\alpha\leq 0$ to measure uncertainty leaves us with an important problem of undecidability; more precisely, arbitrarily small changes in the probability of events can switch our assessment of which between two probability distributions is more uncertain. Note that this is a physically significant issue, as any physical experiment allows us to determine the probability of events only up to an arbitrarily small, but non-zero error. Hence, using such non-continuous uncertainty functions may lead to a situation in which we need to change our assessment of which of two probability distributions is more uncertain according to an unobservable event. 

To illustrate this problem, let us consider the example of two distributions $\v{p} \otimes \v{r}$ and $\v{q} \otimes \v{r}$ (both with full support) and fix $\alpha < 0$. Then if \mbox{$H_\alpha (\v{p}) > H_\alpha (\v{q})$} we have that $\v{p}\otimes\v{r}$ is \emph{more} uncertain than $\v{q}\otimes\v{r}$ according to the chosen measure $H_\alpha$. However, as we can only know the probabilities of events up to an arbitrarily good approximation, the probability distribution $\v{r}$ on the right hand side may actually be $\v{r}^\epsilon$ with \mbox{$||\v{r} - \v{r}^\epsilon||_1\leq \epsilon$} and some arbitrarily small $\epsilon$ (here $||\cdot||_1$ denotes the $\ell_1$ norm). Then by choosing $\v{r}=(1,0)$ and $\v{r}^\epsilon = (\epsilon, 1-\epsilon)$ we get that $\v{p}\otimes\v{r}$ is \emph{less} uncertain than $\v{q}\otimes\v{r}^\epsilon$, according to $H_\alpha$, for any non-zero $\epsilon$, whereas it is more uncertain if $\epsilon =0$ exactly. Hence, our assessment of which probability distribution is more uncertain is reversed by an undecidable fact (i.e., if $\epsilon$ is exactly zero or not). 
 
To overcome the problem of undecidability one can simply require the continuity of uncertainty functions: given any $\v{p}$, for all $\delta$ there should exist $\epsilon$ such that
\begin{equation}
||\v{p}-\v{p}^\epsilon||_1 \leq \epsilon \Longrightarrow |u(\v{p}) - u(\v{p}^\epsilon)| \leq \delta.
\end{equation}
It is then clear that given $\v{p}$ and $\v{q}$, with $u(\v{p})>u(\v{q})$, also all elements of an $\epsilon$-ball around $\v{p}$ are more uncertain than all elements of an $\epsilon$-ball around $\v{q}$, for $\epsilon >0$ small enough. Note that R\'{e}nyi entropies of order $\alpha \leq 0$ and the Burges entropy are not continuous for distributions without full support. Hence, if we decide to exclude measures affected by this problem, we further restrict the set of uncertainty functions to Schur-concave, continuous and additive functions, denoted by $\mathcal{U}_+$. As before, a probability distribution $\v{p}$ is universally less uncertain than $\v{q}$ if and only if $H_\alpha(\v{p})< H_\alpha(\v{q})$ for all $\alpha>0$.

Let us now summarise the main message of this Section. We have defined three families of bona-fide uncertainty functions (see Fig.~\ref{fig:families}) by means of three natural axioms:\footnote{It may be also worth exploring the set of Schur-concave and continuous functions. In fact, if any of the three presented conditions may be dropped and still give a physically reasonable framework, this seems to be Axiom~\ref{axiom2}.}
\begin{enumerate}
\item Non-increasing under random relabelling, Eq.~\eqref{eq:axiom}, as introduced in Ref.~\cite{friedland2013universal}.  \label{axiom1}
\item Additivity for independent random variables.\label{axiom2}
\item Continuity.\label{axiom3}
\end{enumerate}
Within $\mathcal{S}$, $\v{p}$ is universally more uncertain than $\v{q}$ if and only if \mbox{$p \succ q$}; within $\mathcal{U}$, if and only if \mbox{$H_\alpha(\v{p})< H_\alpha(\v{q})$} for all \mbox{$\alpha \in [-\infty, \infty]$}; and within $\mathcal{U}_+$, if and only \mbox{$H_\alpha(\v{p})< H_\alpha(\v{q})$} for all $\alpha >0$.

\begin{figure}[t!]
\includegraphics[width=\columnwidth]{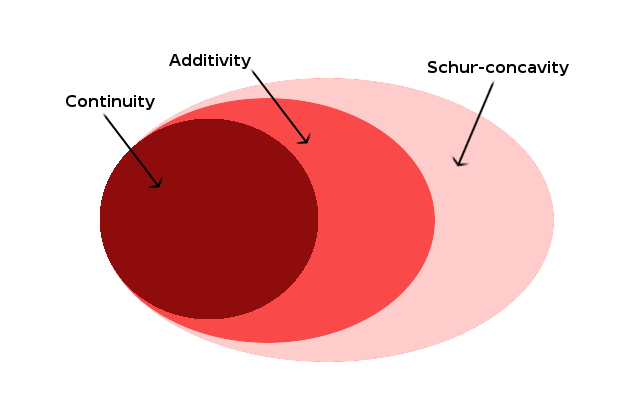}
\caption{\label{fig:families} 
Three possible families of uncertainty functions: Schur-concave functions $\mathcal{S}$ (satisfying Axiom~\ref{axiom1}), R\'{e}nyi entropies of any real order $\mathcal{U}$ (satisfying Axioms~\ref{axiom1}-\ref{axiom2}) and finally R\'{e}nyi entropies of positive order $\mathcal{U}_+$ (satisfying Axioms~\ref{axiom1}-\ref{axiom3}).
}
\end{figure}

\section{Minimum uncertainty states}
\label{sec:minimum}

Having identified the set of conditions that characterise when one probability distribution is universally more uncertain than another, we now have a general framework to study the uncertainty relations. We can investigate the unavoidable uncertainty of the outcome statistics for two non-commuting observables independently of the chosen uncertainty measure. In particular, we will be interested in answering the following question: are there quantum states that simultaneously minimise uncertainty with respect to all uncertainty measures?

In fact, any uncertainty function $u$ defines an uncertainty relation for given observables $A$ and $B$:
\begin{equation}
\label{eq:ubound}
u(\v{p}^A(\rho) \otimes \v{p}^B(\rho))\geq c^u_{AB}\quad \forall \rho\in\B'_d.
\end{equation}
Here $\B'_d$ is some subset of the set of $d$-dimensional quantum states $\B_d$ (often $\B'_d=\B_d$),  $\v{p}^A(\rho)$ and $\v{p}^B(\rho)$ denote the probability distributions over the outcomes of measurements $A$ and $B$ on state $\rho$ and \mbox{$c^u_{AB}>0$} is a constant that does not depend on $\rho \in \B'_d$. As an example, consider the well-known Maassen-Uffink uncertainty relation~\cite{maassen1988generalized}, where $u$ is chosen to be the Shannon entropy $H_1$ and $\B'_d = \B_d$,
\begin{eqnarray*}
	&H_1(\v{p}^A(\rho))+H_1(\v{p}^B(\rho))\geq c^{H_1}_{AB},&\\
	&c^{H_1}_{AB}=-2\ln\left(\max_{ij} |\<a_i|b_j\>|\right),&
\end{eqnarray*}
with $\ket{a_i}$ and $\ket{b_j}$ denoting eigenstates of $A$ and $B$, respectively.

\subsection{Universal minimum uncertainty states}
\label{sec:minimum_1}

States minimising the left-hand side of Eq.~\eqref{eq:ubound} for some choice of $u$ and $\B'_d$ will be called minimum uncertainty states (MUS). These have been found in the case of $u=H_1$ (the Shannon entropy)~\cite{sanches1998optimal,ghirardi2003optimal} and $u=H_2$ (collision entropy)~\cite{bosyk2012collision} for $\B'_d=\B_2$ (qubit systems). However, if we restrict the study of uncertainty relations to a particular uncertainty function, then anything we can say about the structure of MUS will, in general, not hold for a different measure. In this work, having argued for the general sets of bona-fide uncertainty measures {$\mathcal{S}$, $\mathcal{U}$ and $\mathcal{U}_+$}, we can introduce the notion of \emph{universal} minimum uncertainty state: a state that minimises all $u \in \mathcal{F}$ simultaneously, with $\mathcal{F}$ being one of the three sets of uncertainty measures introduced in the previous section. 
\begin{definition}[Universal MUS]
\label{universalmus}
A universal minimum uncertainty state within a subset $\B_d'\subseteq\B_d$ is a state \mbox{$\rho\in\B_d'$} that is universally less uncertain than any other element in $\B_d'$ (modulo permutations of the outcomes). More precisely,
\begin{equation*}
u(\v{p}_A(\rho)\otimes\v{p}_B(\rho))< u(\v{p}_A(\sigma)\otimes\v{p}_B(\sigma)),
\end{equation*}
for all $u \in \mathcal{F}$ and for all $\sigma\in\B_d'$ such that \mbox{$\pi[\v{p}_A(\rho)\otimes\v{p}_B(\rho)]\neq \v{p}_A(\sigma)\otimes\v{p}_B(\sigma)$}, with $\pi$ being an arbitrary permutation.\footnote{Note that there are no probability distributions, other than those linked by permutations, that have the same uncertainty with respect to all $U \in \mathcal{F}$.}  
\end{definition}
\noindent The existence of such special states is conceptually very intriguing. Does quantum mechanics permit their existence for some natural choice of $\B_d'$? An obvious choice that we will consider is the full state space $\B_d'=\B_d$. However, we will also focus on another physically motivated subset of states $\B_d'$ arising while studying the uncertainty relations in the presence of classical noise (where $\B_d'\subset \B_d$ is chosen to be a subset with a given level of mixedness~\cite{luo2005heisenberg,park2005improvement,rudnicki2014strong,korzekwa2014quantum}).

Investigating the existence of universal MUS, or even some approximate version of them -- briefly discussed later in Sec.~\ref{sec:approximate} -- can also be practically relevant. This is because universal MUS simultaneously minimise all possible uncertainty measures over a considered set of states, and different $u$ are operationally relevant in different situations. As uncertainty relations have a range of applications in cryptography and quantum information~\cite{wehner2010entropic}, we conjecture that universal MUS may be useful when we want to perform a protocol, but we do not know in advance what the rules of the probabilistic game are. 	

In what follows we first provide a general no-go theorem forbidding the existence of universal MUS within the full state space $\B_d$. This shows that the best (least uncertain) state always depends on the details of the probabilistic game being played; no ultimate top element exists. However, in many physically relevant scenarios the only available quantum states are mixed. Hence, in the next section, we will explore features emerging from the interplay between non-commutativity and noise.

\subsection{No-go theorem for pure universal MUS}

In the case of two commuting observables (or more generally observables sharing an eigenstate) the existence of a universal MUS is trivial: any common eigenstate has a sharp distribution with respect to both measurements. However, the problem is non-trivial for observables that do not share an eigenstate. In fact, the following result shows that in this case no top element exists within the full unconstrained state space $\B_d$ for all three choices of $\mathcal{F}$:
\begin{theorem}
\label{theorem:noPureUMUS}
Given observables $A$ and $B$ acting on \mbox{$d$-dimensional} Hilbert space $\H_d$ and not sharing any common eigenstate, no universal MUS within the full state space $\B_d$ exists.
\end{theorem}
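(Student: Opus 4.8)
The plan is to prove the contrapositive structure directly: assume a universal MUS $\rho$ exists within $\B_d$ and derive a contradiction from the non-commutativity of $A$ and $B$. By the characterisation established in Section~\ref{sec:generalFamily}, a universal MUS (in the strongest sense, working within $\mathcal{S}$, which is the weakest set of axioms and hence the hardest case to rule out) must satisfy $\v{p}_A(\rho)\otimes\v{p}_B(\rho)\succ \v{p}_A(\sigma)\otimes\v{p}_B(\sigma)$ for every $\sigma\in\B_d$. Since majorisation is the strongest of the three orderings (it implies the R\'enyi orderings for $\mathcal{U}$ and $\mathcal{U}_+$), it suffices to show that no single probability vector can majorise the joint outcome distribution of every state; establishing the no-go for $\mathcal{S}$ automatically covers $\mathcal{U}$ and $\mathcal{U}_+$.

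First I would identify the natural candidate for a top element. A universal MUS must in particular minimise uncertainty with respect to measures that are sensitive only to the largest outcome probability (for instance the min-entropy $H_\infty$, or within $\mathcal{S}$ any Schur-concave function peaked at sharp distributions). This forces $\rho$ to maximise $\max_i p^A_i(\rho)$ and simultaneously $\max_j p^B_j(\rho)$ in a compatible way; the only states achieving a sharp distribution for $A$ are the eigenstates $\ket{a_i}$, and likewise $\ket{b_j}$ for $B$. The key tension is that a state cannot be sharp for both $A$ and $B$ unless they share an eigenstate, which is excluded by hypothesis. I would make this precise by comparing $\rho$ against two specific competitor states: an eigenstate $\ket{a_k}$ of $A$ and an eigenstate $\ket{b_l}$ of $B$. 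For the comparison with $\ket{a_k}$, the distribution $\v{p}_A(a_k)$ is sharp, so $\v{p}_A(a_k)\otimes\v{p}_B(a_k)$ has largest component equal to $\max_j |\braket{a_k}{b_j}|^2$; for $\rho$ to majorise this, the largest component of $\v{p}_A(\rho)\otimes\v{p}_B(\rho)$ must be at least as big, and similarly using $\ket{b_l}$.

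The heart of the argument is then a two-sided squeeze on the largest component $M(\rho):=\max_i p^A_i(\rho)\cdot\max_j p^B_j(\rho)$ (the top element of the tensor-product distribution is the product of the tops). To majorise the $A$-eigenstate competitor I need $M(\rho)\ge \max_k \max_j |\braket{a_k}{b_j}|^2$, and to majorise the $B$-eigenstate competitor I need $M(\rho)\ge \max_l \max_i |\braket{b_l}{a_i}|^2$; these two bounds coincide and equal $c:=\max_{ij}|\braket{a_i}{b_j}|^2$. On the other hand, I would bound $M(\rho)$ from above: since $\max_i p^A_i(\rho)=\max_i \matrixel{a_i}{\rho}{a_i}$ and $\max_j p^B_j(\rho)=\max_j \matrixel{b_j}{\rho}{b_j}$, a Cauchy--Schwarz or operator-inequality estimate shows that the product cannot reach $c$ unless $\rho$ is simultaneously an eigenstate of both $A$ and $B$. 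Concretely, equality in the relevant bound requires $\rho=\ketbra{a_i}{a_i}$ for the $A$-sharp requirement and $\rho=\ketbra{b_j}{b_j}$ for the $B$-sharp requirement, and these are incompatible when $\braket{a_i}{b_j}\neq 1$, i.e.\ whenever $A,B$ share no eigenstate.

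The main obstacle I anticipate is handling the \emph{strict} inequality in Definition~\ref{universalmus} together with the full majorisation condition rather than just the top component: majorisation constrains all partial sums $\sum_{i\le k}p^\downarrow_i$, not merely the first, so I must ensure the contradiction survives at the level of the complete majorisation relation and is not evaded by some state that ties on the largest component but wins elsewhere. I would address this by arguing that the candidate top element would have to be a sharp distribution (the unique majorising element of the whole simplex is $(1,0,\dots,0)$), so the only way a universal MUS can exist is if some $\rho$ produces a sharp joint distribution, which again demands a shared eigenstate. A secondary subtlety is the permutation clause ``modulo permutations'': I must verify that the competitor distributions are genuinely \emph{not} permutations of the candidate's distribution, so that the strict inequality is actually required and the contradiction is real; this follows because eigenstate competitors yield distinct support patterns that cannot be permuted into one another when $A$ and $B$ share no eigenstate.
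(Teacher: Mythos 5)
There are two genuine gaps in your proposal, and the first is fatal to the overall strategy. You claim that because $\mathcal{S}$ is defined by the weakest axioms it is ``the hardest case to rule out,'' so that a no-go for $\mathcal{F}=\mathcal{S}$ automatically covers $\mathcal{U}$ and $\mathcal{U}_+$. The implication runs the other way: weaker axioms give a \emph{larger} family of uncertainty functions, so being a universal MUS within $\mathcal{S}$ is the \emph{strongest} requirement and hence the easiest to refute. Showing that no joint distribution majorises all others says nothing about whether some state might still minimise every $H_\alpha$ with $\alpha>0$ (the family $\mathcal{U}_+$), since that ordering is strictly weaker than majorisation. To prove the theorem for all three choices of $\mathcal{F}$ you must derive the contradiction using only functions belonging to the smallest family $\mathcal{U}_+$; this is exactly what the paper does, using the Shannon entropy (strict concavity plus additivity to reduce the problem to pure states), $H_\infty$, and $H_\alpha$ for small $\alpha>0$.

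The second gap is that your ``two-sided squeeze'' on the largest component does not close. Writing $c=\max_{ij}|\langle a_i|b_j\rangle|<1$, the $H_\infty$-optimal pure states are not eigenstates of $A$ or $B$ but the superpositions $\ket{\psi_\infty}\propto\ket{a_i}+e^{-i\phi}\ket{b_j}$ on the maximising pair; these achieve top joint probability $(1+c)^2/4$, which is strictly \emph{larger} than the value $c^2$ attained by the eigenstate competitors, since $(1+c)^2/4-c^2=(1-c)(1+3c)/4>0$. So your claimed upper bound forcing $\rho$ to be a simultaneous eigenstate is false, there is no contradiction at the level of the first partial sum, and the auxiliary claim that a top element would have to produce the sharp distribution $(1,0,\dots,0)$ is also unjustified — one only needs to dominate the distributions actually realisable by quantum states, not the whole simplex. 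The genuine obstruction, which the paper exploits, lives at the opposite end of the spectrum: the joint distribution of $\ket{\psi_\infty}$ has full support of size $d^2$, whereas an eigenstate of $A$ or $B$ yields support of size at most $d$, so by continuity of $H_\alpha$ in $\alpha$ there is some small $\alpha>0$ for which the eigenstate has strictly smaller $H_\alpha$. Since the $H_\infty$-minimiser is forced to be one of the $\ket{\psi_\infty^m}$ yet loses to an eigenstate on some $H_\alpha$ with $\alpha>0$, no state can minimise all of $\mathcal{U}_+$ (hence all of $\mathcal{U}$ or $\mathcal{S}$) simultaneously.
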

\noindent The proof consists of two parts. First, we prove that if there exists a universal MUS $\rho\in\B_d$ then it must be pure. Next, we find all states $\{\ket{\psi^{\infty}_i}\}$ that minimise $H_\infty$ and show that there exist pure states that have smaller $H_\alpha$ than any of the $\{\ket{\psi^{\infty}_i}\}$ for some $\alpha>0$. Therefore we conclude that no state can simultaneously minimise all uncertainty measures $u\in\mathcal{F}$ over the full state space $\B_d$. The technical details of the proof can be found in Appendix~B.

Although universal MUS do not exist within the full state space, they still may appear in many physical scenarios where some degree of noise is unavoidable. Noise may be present due to inevitable imperfections in the experimental apparatus, or because the system under scrutiny is entangled with some other degrees of freedom we do not have access to. Hence, one is left to wonder whether the no-go result we derived is robust to noise and, more generally, what is the effect of noise on the structure of minimal uncertainty states. We will discuss this in the next section, starting from some conceptual remarks about assessing uncertainty in the presence of classical uniform noise.

\section{Noise and uncertainty}

\subsection{The role of noise and $H_2$ in the classical case}

To build up intuition as to why the introduction of noise can make a difference in assessing uncertainty, it is useful to start with a simple yet suggestive example. Whereas the importance of $H_0$, $H_1$ and $H_\infty$ R{\'e}nyi entropies has been previously stressed~\cite{gour2015resource}, here we emphasize the special role played by the collision entropy $H_2$ in the presence of noise. Consider two probability distributions
\begin{equation*}
\v{p}=(0.77,0.10,0.10,0.03), \quad \v{q}=(0.63,0.35,0.01,0.01),
\end{equation*}
and two sources $P$ and $Q$ that produce messages by drawing from a four-element alphabet according to these probability distributions. One can immediately check that \mbox{$H_1(\v{p}) > H_1(\v{q})$}, which means that the messages produced by $Q$ will have a higher compression rate than those produced by $P$. However, now assume that there is an additional noise channel that affects the messages produced by sources $P$ and $Q$, so that the effective probability distributions become
\begin{equation}
\label{eq:noisyprobability}
\v{p}^\epsilon = \epsilon \v{\eta} + (1-\epsilon) \v{p},
\end{equation}
and similarly for $\v{q}^{\epsilon}$, where $\v{\eta} =(0.25,0.25,0.25,0.25)$ is a uniform distribution. It is then easy to verify that, for $\epsilon \geq 0.05$, we have \mbox{$H_1(\v{p}^\epsilon) < H_1(\v{q}^\epsilon)$}. According to the Shannon entropy, $\v{p}$ is more uncertain than $\v{q}$ (and hence more difficult to compress), but the situation is reversed once enough noise is introduced. This shows that the noise more strongly affects the information content of the message produced by $Q$, as measured by $H_1$.

This discussion leads us to the following question: under what conditions the information content encoded in a source $P$ (as measured by a generic $H_\alpha$) is more strongly affected by uniform noise than the information encoded in $Q$? The answer is provided by the following result:
\begin{lemma}
\label{lemma:ordering_by_2}
Given two probability distributions $\v{p}$ and $\v{q}$, with $H_2(\v{p})\neq H_2(\v{q})$, the following statements are equivalent for any given $\alpha\in(-\infty,\infty)$:
\begin{enumerate}
\item There exists $\epsilon_\alpha \in [0,1)$ such that
\begin{equation*}
H_{\alpha}(\v{p}^\epsilon) < H_{\alpha}(\v{q}^\epsilon) \quad \forall \epsilon\geq\epsilon_\alpha.
\end{equation*}
\item $H_2(\v{p}) < H_2(\v{q})$.
\end{enumerate}
\end{lemma}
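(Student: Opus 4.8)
The plan is to analyse both distributions in the regime of near-maximal noise, $\epsilon\to 1^-$, where $\v{p}^\epsilon$ and $\v{q}^\epsilon$ both collapse onto the uniform distribution $\v{\eta}$, and to show that the leading-order deviation of $H_\alpha$ from its maximal value $\ln n$ (with $n$ the number of outcomes) is governed by a single quantity that is monotonically related to $H_2$. The key observation is that statement (i) only asks for the \emph{existence} of a threshold $\epsilon_\alpha$, so it is a purely local statement about the behaviour of $H_\alpha(\v{p}^\epsilon)-H_\alpha(\v{q}^\epsilon)$ as $\epsilon\to 1$; an asymptotic expansion around the uniform point is therefore the natural and sufficient tool. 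First I would set $\delta=1-\epsilon$ and write $p_i^\epsilon=1/n+\delta\,v_i$ with $v_i=p_i-1/n$, using that $\sum_i v_i=0$ because $\v{p}$ and $\v{\eta}$ are both normalised.

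Next I would Taylor-expand $\sum_i (p_i^\epsilon)^\alpha$ about $x_0=1/n$. The crucial simplification is that the first-order term is proportional to $\sum_i v_i=0$ and hence vanishes, so the leading correction is quadratic and proportional to $V_{\v p}:=\sum_i v_i^2=\sum_i (p_i-1/n)^2$. Feeding this into the definition of the R\'enyi entropy yields an expansion of the form
\begin{equation*}
H_\alpha(\v{p}^\epsilon)=\ln n-\frac{\alpha}{2}\,n\,\delta^2\,V_{\v p}+O(\delta^3),
\end{equation*}
and identically for $\v{q}$. Subtracting, the sign of $H_\alpha(\v{p}^\epsilon)-H_\alpha(\v{q}^\epsilon)$ for $\epsilon$ close to $1$ is fixed by the sign of the quadratic coefficient, i.e. by the sign of $-\alpha\,(V_{\v p}-V_{\v q})$.

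The remaining ingredient is to relate $V_{\v p}$ to $H_2$. Since $\sum_i p_i^2=1/n+V_{\v p}$, we have $H_2(\v{p})=-\ln\!\left(1/n+V_{\v p}\right)$, which is strictly decreasing in $V_{\v p}$; hence $H_2(\v{p})<H_2(\v{q})$ is equivalent to $V_{\v p}>V_{\v q}$, meaning $\v{p}$ sits \emph{farther} from the uniform distribution than $\v{q}$. Combining this with the expansion, I would prove the two implications separately. For (ii)$\Rightarrow$(i): if $V_{\v p}>V_{\v q}$ then for $\alpha>0$ the quadratic coefficient is strictly negative, so there exists $\epsilon_\alpha<1$ beyond which $H_\alpha(\v{p}^\epsilon)<H_\alpha(\v{q}^\epsilon)$. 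For (i)$\Rightarrow$(ii) I would argue by contraposition, noting that the hypothesis $H_2(\v{p})\neq H_2(\v{q})$ excludes the degenerate case $V_{\v p}=V_{\v q}$: if instead $V_{\v p}<V_{\v q}$ the coefficient becomes positive, so the inequality in (i) is violated for all $\epsilon$ sufficiently near $1$, contradicting the existence of any valid $\epsilon_\alpha$.

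The main obstacle I expect is making the ``near $\epsilon=1$'' argument rigorous and uniform: one must control the $O(\delta^3)$ remainder so that the leading-order sign genuinely dictates the inequality throughout a one-sided neighbourhood of $\epsilon=1$. This should be clean because every entry of $\v{p}^\epsilon$ stays bounded away from zero as $\delta\to 0$, so the Taylor expansion of $x\mapsto x^\alpha$ holds with a uniformly bounded remainder. The genuinely delicate points are the boundary values of $\alpha$, where the sign of $\alpha$ enters the coefficient $-\alpha/2$ explicitly and must be tracked with care: at $\alpha=1$ the prefactor $1/(1-\alpha)$ is singular and the Shannon case has to be recovered as a limit (the coefficient $-\alpha/2$ extends continuously), while at $\alpha=0$ the quadratic term vanishes identically ($H_0(\v{p}^\epsilon)=\ln n$ for all $\epsilon>0$), so the degenerate orders deserve separate attention.
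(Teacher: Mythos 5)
Your core computation is exactly the paper's (Appendix C): set $\delta=1-\epsilon$, expand around the uniform distribution, observe that the linear term vanishes because $\sum_i(p_i-1/n)=0$, and read off the sign of the quadratic coefficient, which is controlled by $\sum_i p_i^2$ and hence by $H_2$; the conversion between $V_{\v p}=\sum_i(p_i-1/n)^2$ and $\sum_i p_i^2$ is immaterial. For $\alpha>0$, including the limit $\alpha=1$, your argument goes through and is the paper's proof in a different parametrization. The gap concerns $\alpha\le 0$, which the lemma explicitly covers. Your expansion $H_\alpha(\v{p}^\epsilon)=\ln n-\tfrac{\alpha}{2}n\delta^2V_{\v p}+O(\delta^3)$ is derived from the standard prefactor $1/(1-\alpha)$; for $\alpha<0$ it yields $H_\alpha(\v{p}^\epsilon)-H_\alpha(\v{q}^\epsilon)=-\tfrac{\alpha}{2}n\delta^2(V_{\v p}-V_{\v q})+O(\delta^3)$, which is \emph{positive} when $V_{\v p}>V_{\v q}$ --- taken at face value, your formula refutes the lemma for negative $\alpha$. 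The resolution is that the paper (Appendix A) defines $H_\alpha$ with an extra overall $\mathrm{sgn}(\alpha)$ precisely so that it stays Schur-concave for $\alpha<0$; with that convention the leading coefficient becomes $-\tfrac{|\alpha|}{2}nV_{\v p}$, negative for every $\alpha\neq 0$, and the equivalence survives. Since you only prove (ii)$\Rightarrow$(i) for $\alpha>0$ and never confront this sign, the negative-$\alpha$ half of the statement is left unproved, and your displayed expansion is wrong there.

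A second, related slip is the case $\alpha=0$: you assert $H_0(\v{p}^\epsilon)=\ln n$ identically, i.e.\ you take $H_0$ to be $\ln|\mathrm{supp}\,\v{p}^\epsilon|$, but the paper defines $H_0$ as the Burg entropy $\tfrac{1}{d}\sum_i\ln p_i$ and explicitly warns that this is \emph{not} the $\alpha\searrow 0$ limit. If $H_0$ were constant on full-support distributions, the strict inequality in statement (i) could never hold and the lemma would fail at $\alpha=0$; with the Burg entropy one finds $H_0(\v{p}^\epsilon)=-\ln n-\tfrac{n}{2}\delta^2V_{\v p}+O(\delta^3)$, so the same quadratic argument applies and no degeneracy occurs. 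Both problems are definitional rather than conceptual, and your control of the $O(\delta^3)$ remainder (entries bounded away from zero near $\delta=0$) is fine, but as written the cases $\alpha\le 0$ are either unproved or apparently contradicted by your own formulas, so the proof is incomplete for the range of $\alpha$ the lemma claims.
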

\noindent The simple proof is based on the fact that, for every $\alpha$, $H_{\alpha}(\v{p}^\epsilon)$ has a maximum at $\epsilon=1$ (as $\v{p}^\epsilon$ then corresponds to a uniform distribution), so that the dominating term in the Taylor expansion around this maximum is quadratic in the probabilities $\{p_i\}$. Then for finite $\alpha$ the ordering between $H_{\alpha}(\v{p}^\epsilon)$ and $H_{\alpha}(\v{q}^\epsilon)$ for $\epsilon$ close enough to 1 depends solely on the ordering of the $\alpha=2$ R{\'e}nyi entropy (see Appendix~C for the details). Note, however, that the ordering for the limiting cases of $\alpha=\pm \infty$ can never be changed by introducing noise (see Appendix~D). 

Observation~\ref{lemma:ordering_by_2} shows that noise can indeed play a crucial role in uncertainty relations: it induces an order within the set of probability distributions, with $H_2$ playing a leading role. However, the problem is more complicated than one might initially expect. Recall that the crucial question unanswered by Observation~\ref{lemma:ordering_by_2} is whether a ``finite'' amount of noise is sufficient to induce an ordering between all Renyi entropies; in other words, if there exists an $\tilde{\epsilon}<1$ \emph{independent} of $\alpha$ such that \mbox{$H_\alpha(\v{p}^{\tilde{\epsilon}})<H_\alpha(\v{q}^{\tilde{\epsilon}})$} for all $\alpha$. In fact, given two generic probability distributions, the condition \mbox{$H_\alpha(\v{p}) < H_\alpha(\v{q})$} for $\alpha=2$ and $\alpha=\pm\infty$ is necessary, but \emph{not} sufficient, to induce ordering between R{\'e}nyi entropies for all $\alpha$. A counterexample is given by $\v{p}=(0.37,0.32,0.24,0.07)$ and $\v{q}=(0.36,0.35,0.19,0.10)$. A direct calculation shows that $H_2(\v{p})<H_2(\v{q})$, but for any amount of noise $\epsilon$ we have $H_\alpha(\v{p}) > H_\alpha(\v{q})$ for $\alpha=4/(1-\epsilon)$. This implies that for $\alpha \rightarrow \infty$ the required amount of noise must go to $1$, so there is no single $\tilde{\epsilon}<1$ that ensures the relation \mbox{$H_\alpha(\v{p}^{\tilde{\epsilon}})<H_\alpha(\v{q}^{\tilde{\epsilon}})$} is satisfied for all $\alpha$.

\subsection{The role of noise and $H_2$ in the quantum case}

Given the discussion and results above, it is natural to define a quantum analogue of Eq.~\eqref{eq:noisyprobability} by the set $\epsilon$-noisy states $\B_d^\epsilon$ that can be written in the form
\begin{equation*}
\B_d^\epsilon := \left\{\rho^\epsilon:\quad \rho^\epsilon = \epsilon \iden/d+(1-\epsilon)\rho, \; \; \rho \in \B_d \right\},
\end{equation*}
for a generic state $\rho$ and a fixed $\epsilon\in[0,1]$. However, by the same reasoning as in Theorem~\ref{theorem:noPureUMUS}, i.e., using the strong concavity and additivity of the Shannon entropy, one can show that among all states in $\B_d^\epsilon$ only the ones for which $\rho$ is pure can be universal MUS. This means that considerations concerning universal MUS can be restricted to the set of pseudo-pure states,\footnote{Note that instead of considering a projective measurement described by projectors $\{\ketbra{a_i}{a_i}\}$ on pseudo-pure states, one can equivalently consider a \textit{noisy} positive operator valued measure (POVM) with POVM elements \mbox{$\{\epsilon \iden/d+(1-\epsilon)\ketbra{a_i}{a_i}\}$}.} first introduced in the field of NMR spectroscopy~\cite{cory1997ensemble}: 
\begin{definition}[Pseudo-pure states]
\label{noisystates}
A state belongs to the subset of $\epsilon$-pseudo-pure states if it can be written in the form
\begin{equation}\label{eq:noisy_state}
\rho^\epsilon_\psi=\epsilon \iden/d+(1-\epsilon)\ket{\psi}\bra{\psi}, \quad \epsilon\in [0,1].
\end{equation} 
\end{definition}

We now provide a modified version of Observation~\ref{lemma:ordering_by_2} suited for probability distributions arising from the measurement of two non-commuting observables in the presence of noise. Let us define \small
\begin{equation*}
\Delta H_{\alpha}:= H_{\alpha}(\v{p}^A(\rho^\epsilon) \otimes \v{p}^B(\rho^\epsilon))- H_{\alpha}(\v{p}^A(\sigma^\epsilon) \otimes \v{p}^B(\sigma^\epsilon)).
\end{equation*}\normalsize
We then have the following:
\begin{lemma}
\label{observation3}
Let $\rho$ and $\sigma$ denote any two quantum states and $A$ and $B$ any two observables. If 
\mbox{$e^{-H_2(\v{p^A}(\rho))} + e^{-H_2(\v{p^B}(\rho))} \neq e^{-H_2(\v{p^A}(\sigma))} + e^{-H_2(\v{p^B}(\sigma)}$}, the following two conditions are equivalent for any given $\alpha$:
\begin{enumerate}
\item \label{condition1obs} There exists $ \epsilon_{\alpha}$: 
\begin{equation*} 
\Delta H_{\alpha} < 0, \quad \forall \epsilon \geq \epsilon_\alpha,
\end{equation*}
\item \label{condition2obs} \small{$e^{-H_2(\v{p^A}(\rho))} + e^{-H_2(\v{p^B}(\rho))} > e^{-H_2(\v{p^A}(\sigma))} + e^{-H_2(\v{p^B}(\sigma))}$}.
\end{enumerate}
\end{lemma}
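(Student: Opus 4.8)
The plan is to reduce Observation~\ref{observation3} to the classical Observation~\ref{lemma:ordering_by_2} by recognizing that the relevant probability distribution is the \emph{joint} distribution $\v{p}^A(\rho^\epsilon)\otimes\v{p}^B(\rho^\epsilon)$, and then to track how the $H_2$ ordering of such product distributions translates into the condition stated in item~\ref{condition2obs}. First I would note that, since $\rho^\epsilon=\epsilon\iden/d+(1-\epsilon)\rho$ and the measurement probabilities $\v{p}^A(\cdot)$ are linear in the state, we have $\v{p}^A(\rho^\epsilon)=\epsilon\v{\eta}+(1-\epsilon)\v{p}^A(\rho)$, i.e.\ measuring a pseudo-pure (more generally, an $\epsilon$-noisy) state produces exactly a classically noise-mixed outcome distribution of the form in Eq.~\eqref{eq:noisyprobability}, with uniform $\v{\eta}$. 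Hence each of the two product distributions entering $\Delta H_\alpha$ is itself of the noisy form, and I can apply Observation~\ref{lemma:ordering_by_2} directly to $\v{p}:=\v{p}^A(\rho)\otimes\v{p}^B(\rho)$ and $\v{q}:=\v{p}^A(\sigma)\otimes\v{p}^B(\sigma)$, \emph{provided} one checks that the noise acting on the product distribution is still uniform — which it is, since the tensor product of two uniform distributions is uniform on the product alphabet, and the noisy product $(\epsilon\v{\eta}_A+(1-\epsilon)\v{p}^A)\otimes(\epsilon\v{\eta}_B+(1-\epsilon)\v{p}^B)$ must be re-expressed as a single noise parameter mixing towards the joint uniform distribution.

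By Observation~\ref{lemma:ordering_by_2}, condition~\ref{condition1obs} (existence of a threshold $\epsilon_\alpha$ beyond which $\Delta H_\alpha<0$) is equivalent to the single inequality $H_2(\v{p}^A(\rho)\otimes\v{p}^B(\rho))<H_2(\v{p}^A(\sigma)\otimes\v{p}^B(\sigma))$. The remaining task is therefore purely algebraic: to show that this $H_2$ inequality for the product distributions is identical to the statement in item~\ref{condition2obs}. Here I would invoke the additivity of the R\'enyi entropies together with the fact that $H_2$ admits the closed form $H_2(\v{p})=-\ln\sum_i p_i^2=-\ln e^{-H_2(\v{p})}$, so that $\sum_i p_i^2=e^{-H_2(\v{p})}$. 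Additivity gives $e^{-H_2(\v{p}^A\otimes\v{p}^B)}=e^{-H_2(\v{p}^A)}\,e^{-H_2(\v{p}^B)}$, turning the product $H_2$ into a product of the individual ``collision probabilities''. The step I expect to require the most care is thus translating between the multiplicative form $e^{-H_2(\v{p}^A(\rho))}e^{-H_2(\v{p}^B(\rho))}$ that naturally arises from additivity and the \emph{additive} combination $e^{-H_2(\v{p}^A(\rho))}+e^{-H_2(\v{p}^B(\rho))}$ appearing in the statement.

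The resolution, and the genuine technical content of the proof, lies in noting that the threshold condition in Observation~\ref{lemma:ordering_by_2} is governed not by the full $H_2$ value but by the \emph{leading quadratic coefficient} in the Taylor expansion of $H_\alpha(\v{p}^\epsilon)$ about the uniform maximum at $\epsilon=1$. For a product distribution, expanding $\v{p}^\epsilon=\v{\eta}+(1-\epsilon)(\v{p}-\v{\eta})$ and collecting the second-order term, the tensor structure makes the quadratic deviation of the joint distribution split into a \emph{sum} of the individual quadratic deviations of the $A$ and $B$ marginals (the cross term being first order in each factor and hence contributing at the same quadratic order additively, while the genuinely mixed second-order piece vanishes against the uniform reference). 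Consequently, near $\epsilon=1$ the relevant quantity controlling the sign of $\Delta H_\alpha$ is the sum $e^{-H_2(\v{p}^A(\rho))}+e^{-H_2(\v{p}^B(\rho))}$ rather than the product, and the larger this sum the further $\v{p}^A(\rho)\otimes\v{p}^B(\rho)$ sits from uniform in the dominant direction, yielding smaller $H_\alpha$ for $\epsilon$ close to $1$. This identifies condition~\ref{condition2obs} as precisely the product-distribution analogue of item~2 of Observation~\ref{lemma:ordering_by_2}, and I would defer the explicit second-order expansion to the appendix (in the same spirit as Appendix~C), keeping the main-text argument at the level of this reduction. The main obstacle is ensuring this quadratic-term additivity is handled correctly — in particular, verifying that the non-uniformity parameter for the joint distribution combines additively across the two factors to this order, and that the nondegeneracy hypothesis $e^{-H_2(\v{p}^A(\rho))}+e^{-H_2(\v{p}^B(\rho))}\neq e^{-H_2(\v{p}^A(\sigma))}+e^{-H_2(\v{p}^B(\sigma))}$ is exactly what guarantees a strict, rather than borderline, ordering.
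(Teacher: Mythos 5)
Your final paragraph is, in substance, the paper's own proof (Appendix~E): expand the joint distribution $\v{p}^A(\rho^\epsilon)\otimes\v{p}^B(\rho^\epsilon)$ around the uniform point $r=1-\epsilon=0$, note that its first-order deviation from uniform is $\frac{r}{d}\left[(p^A_i-\tfrac{1}{d})+(p^B_j-\tfrac{1}{d})\right]$, that the genuinely quadratic piece sums to zero and hence does not affect the leading coefficient of $\Delta H_\alpha$, and that the cross term in the squared norm of the first-order deviation vanishes because each marginal deviation sums to zero --- so the coefficient of $r^2$ is an increasing affine function of $e^{-H_2(\v{p}^A)}+e^{-H_2(\v{p}^B)}$, which is exactly condition~\ref{condition2obs}. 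The nondegeneracy hypothesis is then used, as in Appendix~C, to turn the non-strict inequality obtained in the converse direction into a strict one. Up to deferring the explicit expansion, this is the same route the paper takes.

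The reduction attempted in your first two paragraphs, however, is a step that would fail and should be excised rather than ``resolved''. The distribution $\v{p}^A(\rho^\epsilon)\otimes\v{p}^B(\rho^\epsilon)=\bigl(\epsilon\v{\eta}+(1-\epsilon)\v{p}^A\bigr)\otimes\bigl(\epsilon\v{\eta}+(1-\epsilon)\v{p}^B\bigr)$ contains cross terms of order $\epsilon(1-\epsilon)$ and is \emph{not} of the form $\epsilon'\,\v{\eta}\otimes\v{\eta}+(1-\epsilon')\,\v{p}^A\otimes\v{p}^B$ for any single $\epsilon'$, so Observation~\ref{lemma:ordering_by_2} cannot be invoked as a black box on the joint distribution. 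Consequently your paragraph-two conclusion --- that condition~\ref{condition1obs} is equivalent to $H_2(\v{p}^A(\rho)\otimes\v{p}^B(\rho))<H_2(\v{p}^A(\sigma)\otimes\v{p}^B(\sigma))$, which by additivity orders the \emph{products} $e^{-H_2(\v{p}^A)}\,e^{-H_2(\v{p}^B)}$ --- is simply false, and it contradicts the \emph{sum} criterion you correctly derive afterwards (the two orderings genuinely disagree for some pairs of states, which is precisely why the lemma is stated with the sum). The tension you describe is not a translation issue to be massaged away; it is the signal that the black-box reduction is invalid and that the expansion must be redone for the actual noisy joint distribution, as you ultimately do.
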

\noindent The proof can be found in Appendix~E. For any given measure of uncertainty $H_\alpha$, this observation shows that the knowledge of $H_2$ is sufficient to answer the following question: ``which of two states has outcome statistics of two non-commuting measurements more uncertain in the presence of large enough uniform noise?''.

Nevertheless, similarly to the classical case, we have no guarantee that a finite amount of noise will generate an ordering between all R\'enyi entropies. In the next section we will explore this question.

\section{Existence of noisy universal MUS}

\subsection{General results}

Let us start by presenting three general results concerning noisy universal MUS that are valid for all three choices of $\mathcal{F}$. One will give us an explicit candidate for such state; the other two prevent the existence of universal MUS in a broad set of situations.

\subsubsection{A candidate universal MUS}

First, we provide a technical lemma that may be of interest independently from the question of finding noisy universal MUS:
\begin{lem}
\label{lemma:Hinf}
Given observables $A$ and $B$, the \mbox{$\epsilon$-pseudo-pure} state minimising \mbox{$H_\infty(\v{p}^A\otimes\v{p}^B)$} is given by 
\begin{equation}
\label{eq:rho_Hinf}
\rho^\epsilon_{\psi_\infty} = \epsilon \iden/d + (1-\epsilon)\ketbra{\psi_\infty}{\psi_\infty},
\end{equation}
with $\ket{\psi_\infty} \propto \ket{a_i}+e^{-i \phi}\ket{b_j}$, $\ket{a_i}$ and $\ket{b_j}$ being the eigenstates of $A$ and $B$ that maximise $|\<a_i|b_j\>|$ and \mbox{$\phi=\arg \<a_i|b_j\>$}.
\end{lem}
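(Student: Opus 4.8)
The plan is to exploit the additivity of the min-entropy to reduce the problem to a two-variable geometric optimisation. Since $H_\infty(\v{p})=-\ln\max_i p_i$ and $\max_{ij}p_iq_j=(\max_i p_i)(\max_j q_j)$, we have $H_\infty(\v{p}^A\otimes\v{p}^B)=H_\infty(\v{p}^A)+H_\infty(\v{p}^B)$, so minimising $H_\infty$ is equivalent to maximising the product $M_A M_B$ of the two largest outcome probabilities. For a pseudo-pure state $\rho^\epsilon_\psi$ one has $M_A=\epsilon/d+(1-\epsilon)\mu_A$ and $M_B=\epsilon/d+(1-\epsilon)\mu_B$, with $\mu_A=\max_i|\braket{a_i}{\psi}|^2$ and $\mu_B=\max_j|\braket{b_j}{\psi}|^2$. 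Writing $a=\epsilon/d$ and $b=1-\epsilon\ge 0$, the objective $f=(a+b\mu_A)(a+b\mu_B)$ is jointly increasing in $\mu_A$ and $\mu_B$, so everything reduces to asking how large both overlaps can simultaneously be.

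Next I would derive the single constraint coupling $\mu_A$ and $\mu_B$. Let $i^\star,j^\star$ be the indices attaining $\mu_A,\mu_B$ and set $\alpha=\arccos\sqrt{\mu_A}$, $\beta=\arccos\sqrt{\mu_B}$. Because the Fubini--Study distance $d(\ket{u},\ket{v})=\arccos|\braket{u}{v}|$ is a genuine metric on rays, the triangle inequality gives $\arccos|\braket{a_{i^\star}}{b_{j^\star}}|\le\alpha+\beta$. Since $|\braket{a_{i^\star}}{b_{j^\star}}|\le c:=\max_{ij}|\braket{a_i}{b_j}|$, this yields $\gamma:=\arccos c\le\alpha+\beta$, valid for \emph{every} $\ket{\psi}$. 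Thus all achievable pairs $(\mu_A,\mu_B)=(\cos^2\alpha,\cos^2\beta)$ lie in the region $\alpha+\beta\ge\gamma$.

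Because $f$ is decreasing in each angle, its maximum over this region is attained on the boundary $\alpha+\beta=\gamma$. On this line I would substitute $\theta=2\alpha-\gamma$ and use $\cos^2\alpha=\tfrac12\big(1+\cos(\gamma+\theta)\big)$, $\cos^2\beta=\tfrac12\big(1+\cos(\gamma-\theta)\big)$; expanding the product reduces $f$, as a function of $w=\cos\theta\in[\cos\gamma,1]$, to an upward parabola whose vertex sits at the negative argument $w=-A_0\cos\gamma/B_0$ (with $A_0=a+b/2$, $B_0=b/2$). Hence $f$ is increasing on the feasible range and maximised at $w=1$, i.e.\ at the symmetric point $\alpha=\beta=\gamma/2$. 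This gives the upper bound $\mu_A=\mu_B=\cos^2(\gamma/2)=(1+c)/2$.

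Finally I would verify achievability. A direct normalisation of $\ket{\psi_\infty}\propto\ket{a_i}+e^{-i\phi}\ket{b_j}$, with $\ket{a_i},\ket{b_j}$ the maximal-overlap pair and $\phi=\arg\braket{a_i}{b_j}$, gives $\big\||a_i\rangle+e^{-i\phi}|b_j\rangle\big\|^2=2(1+c)$ and hence $|\braket{a_i}{\psi_\infty}|^2=|\braket{b_j}{\psi_\infty}|^2=(1+c)/2$, saturating the bound. Consequently $\rho^\epsilon_{\psi_\infty}$ minimises $H_\infty(\v{p}^A\otimes\v{p}^B)$; the exact matching of bound and achiever moreover forces $M_A=M_B=a+b(1+c)/2$, so no other eigenstate of $A$ or $B$ can yield a larger overlap with $\ket{\psi_\infty}$. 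The main obstacle is the second step: isolating the \emph{correct} tight constraint linking the two maximal overlaps (the Fubini--Study triangle inequality) and confirming it is the binding one; once this geometric input is secured the remaining optimisation is elementary.
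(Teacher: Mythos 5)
Your proof is correct, but it takes a genuinely different route from the paper's. Both arguments start the same way: by positivity, the largest entry of \mbox{$\v{p}^A\otimes\v{p}^B$} factorises as $M_AM_B$ with \mbox{$M_X=\epsilon/d+(1-\epsilon)\mu_X$}, so one must maximise this product over pure states. From there the paper expands $M_AM_B$ into three terms (a constant, a ``sum'' term \mbox{$\max_k p^A_k+\max_l p^B_l$} weighted by $\epsilon(1-\epsilon)/d$, and a ``product'' term weighted by $(1-\epsilon)^2$) and shows that the sum term and the product term are \emph{each} maximised by the same state $\ket{\psi_\infty}$, reusing the explicit amplitude decomposition \mbox{$\ket{\psi}=\sqrt{p}\ket{a_{k_M}}+\sqrt{1-p}\ket{a_{k_M}^\perp}$} and the bound already derived in the proof of Theorem~1. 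You instead keep the product intact, observe it is monotone in $(\mu_A,\mu_B)$, and characterise the achievable pairs via the triangle inequality for the Fubini--Study distance, \mbox{$\arccos c\le\arccos\sqrt{\mu_A}+\arccos\sqrt{\mu_B}$} --- which is exactly the Landau--Pollak constraint the paper mentions only in passing --- and then solve a one-variable quadratic optimisation on the boundary. The underlying geometric fact is the same (the paper's bound \mbox{$\sqrt{p}\,|V|+\sqrt{1-p}\sqrt{1-|V|^2}=\cos(\alpha-\gamma')$} is the saturated triangle inequality in disguise), but your version buys a cleaner, self-contained argument that does not need the sum/product split and makes the binding constraint explicit, while the paper's version buys economy by recycling the machinery of Theorem~1. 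One cosmetic caveat: the vertex formula $w=-A_0\cos\gamma/B_0$ degenerates when $\epsilon=1$ (i.e.\ $B_0=0$), but there the objective is constant and the claim is vacuous, so nothing is lost.
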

\noindent The proof is presented in Appendix~F and follows a route similar to the one used in proving Theorem~\ref{theorem:noPureUMUS}. Lemma~\ref{lemma:Hinf} immediately singles out a candidate for universal MUS by providing its explicit form:

\begin{corol}
	\label{corol:hinf}
Given observables $A$ and $B$ that do not share a common eigenstate, if there exists a universal MUS in $\B_d^\epsilon$, then it must be $\epsilon$-pseudo-pure with pure state $\ket{\psi_\infty}$.
\end{corol}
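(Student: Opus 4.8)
The plan is to obtain this corollary as an almost immediate consequence of Lemma~\ref{lemma:Hinf}, exploiting the fact that the min-entropy $H_\infty$ belongs to each of the three candidate families. First I would recall the reduction already established above: by the strong concavity and additivity of the Shannon entropy $H_1$, among all states $\rho^\epsilon = \epsilon\iden/d + (1-\epsilon)\rho$ in $\B_d^\epsilon$ only those with pure $\rho$ can be universal MUS. Thus, without loss of generality, a universal MUS must be $\epsilon$-pseudo-pure, of the form $\rho^\epsilon_\psi$ in Eq.~\eqref{eq:noisy_state}.

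Next I would observe that $H_\infty(\v{p}) = -\ln\max_i p_i$ is simultaneously Schur-concave, additive and continuous, so that $H_\infty\in\mathcal{F}$ for every admissible choice $\mathcal{F}\in\{\mathcal{S},\mathcal{U},\mathcal{U}_+\}$. By the definition of a universal MUS (Definition~\ref{universalmus}), such a state must minimise \emph{every} $u\in\mathcal{F}$ at once; in particular it is forced to minimise $H_\infty(\v{p}^A\otimes\v{p}^B)$ over $\B_d^\epsilon$. This turns the problem of locating the (only possible) universal MUS into the far more tractable problem of minimising a single, fixed uncertainty measure.

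Finally I would invoke Lemma~\ref{lemma:Hinf}, which pins down the $\epsilon$-pseudo-pure minimiser of $H_\infty(\v{p}^A\otimes\v{p}^B)$ as $\rho^\epsilon_{\psi_\infty}$ with $\ket{\psi_\infty}\propto\ket{a_i}+e^{-i\phi}\ket{b_j}$. Intersecting the two necessary conditions --- being $\epsilon$-pseudo-pure and being the $H_\infty$-minimiser --- leaves $\rho^\epsilon_{\psi_\infty}$ as the only candidate, which is precisely the claim. Since the substantive work, namely characterising the $H_\infty$-minimiser, has already been carried out in Lemma~\ref{lemma:Hinf} (Appendix~F), the only residual point requiring care --- and hence the main obstacle within the corollary itself --- is the uniqueness of that minimiser. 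Should several eigenstate pairs $(\ket{a_i},\ket{b_j})$ attain the maximal overlap $|\<a_i|b_j\>|$, then $\ket{\psi_\infty}$ need not be unique, and one would have to verify that every resulting candidate is covered by the statement. The hypothesis that $A$ and $B$ share no common eigenstate guarantees this maximal overlap is strictly below $1$, which is all that the pseudo-pure reduction and the application of Lemma~\ref{lemma:Hinf} require, and the corollary follows.
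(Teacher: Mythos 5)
Your proposal is correct and matches the paper's (implicit) argument exactly: the paper also reduces to $\epsilon$-pseudo-pure states via the strong concavity and additivity of the Shannon entropy, and then treats the corollary as an immediate consequence of Lemma~\ref{lemma:Hinf} because any universal MUS must in particular minimise $H_\infty$, which belongs to all three families $\mathcal{S}$, $\mathcal{U}$ and $\mathcal{U}_+$. Your remark on possible non-uniqueness when several pairs attain the maximal overlap is consistent with the paper's own notation $\{\ket{\psi_\infty^m}\}$ in Appendix~B and does not affect the conclusion.
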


\subsubsection{No-go results}

Let us now present the first no-go result concerning mutually unbiased observables:
\begin{theorem}[No-go for mutually unbiased observables]\label{lemma:mub}
Given observables $A$ and $B$ that are mutually unbiased, no universal MUS exists within $\B_d^\epsilon$, for any \mbox{$\epsilon \in [0,1)$}.
\end{theorem}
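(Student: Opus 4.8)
The plan is to combine the rigidity of Corollary~\ref{corol:hinf} with a single competitor whose R\'enyi spectrum crosses that of the candidate. By Corollary~\ref{corol:hinf}, if a universal MUS exists in $\B_d^\epsilon$ it must be the pseudo-pure state $\rho^\epsilon_{\psi_\infty}$ of Eq.~\eqref{eq:rho_Hinf}, with $\ket{\psi_\infty}\propto \ket{a_i}+e^{-i\phi}\ket{b_j}$ (for mutually unbiased $A,B$ any pair $i,j$ serves, since all overlaps are equal). Hence it suffices to exhibit, for every $\epsilon\in[0,1)$, one state $\sigma^\epsilon\in\B_d^\epsilon$ and one order $\alpha>0$ with $H_\alpha(\v{p}^A(\rho^\epsilon_{\psi_\infty})\otimes\v{p}^B(\rho^\epsilon_{\psi_\infty}))\geq H_\alpha(\v{p}^A(\sigma^\epsilon)\otimes\v{p}^B(\sigma^\epsilon))$, with the two distributions not permutation-equivalent. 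I would take $\sigma$ to be an eigenstate $\ket{a}$ of $A$, i.e.\ the pseudo-pure state $\rho^\epsilon_a$. Mutual unbiasedness makes this competitor extremal: $\v{p}^A(\ket{a})$ is sharp while $\v{p}^B(\ket{a})$ is exactly uniform, whereas it forces the candidate to carry two identical, merely peaked distributions whose common single-observable purity I call $P_\infty$.

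The heart of the argument is that the two states are ordered oppositely at the two ends of the R\'enyi family. At $\alpha=\infty$ the candidate wins by construction, since Lemma~\ref{lemma:Hinf} gives $H_\infty(\rho^\epsilon_{\psi_\infty})<H_\infty(\rho^\epsilon_a)$. I would then show that at some finite order the eigenstate wins, using Observation~\ref{observation3}: the relevant quantity is the collision-purity sum $e^{-H_2(\v{p}^A)}+e^{-H_2(\v{p}^B)}$, which equals $2P_\infty$ for the candidate and $1+1/d$ for the eigenstate (sharp $+$ uniform). A direct evaluation gives $2P_\infty=(1+3s+4s^2)/[2(1+s)]$ with $s=1/\sqrt{d}$, so that the gap $1+1/d-2P_\infty$ is strictly positive for $d\geq 3$. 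Thus the eigenstate has the larger purity sum, and Observation~\ref{observation3} makes it strictly less uncertain than the candidate for every finite $\alpha$ once the noise is large enough; there the candidate fails to minimise $H_\alpha$.

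To upgrade this from the asymptotic high-noise regime to all $\epsilon\in(0,1)$, I would track the two curves $\alpha\mapsto H_\alpha$ directly and show they cross at every noise level. The case $\epsilon=0$ is already Theorem~\ref{theorem:noPureUMUS}; for $0<\epsilon<1$ both states have full support, hence the same Hartley value $H_0=2\ln d$, and the winner just above $\alpha=0$ is fixed by the slope of $H_\alpha$ at $\alpha\to 0^+$, i.e.\ by the log-sums $\sum_{ij}\ln\!\big(p^A_i p^B_j\big)$ of the two product distributions. Because the eigenstate pushes $d-1$ of its $A$-outcomes down to the noise floor $\epsilon/d$, its log-sum is the more negative of the two for $d\geq 3$, so the eigenstate dips below the candidate for small $\alpha>0$. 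Together with the $\alpha=\infty$ inequality this produces a genuine crossing for every such $\epsilon$, so neither state is universally less uncertain and no universal MUS survives.

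The main obstacle is precisely this uniformity in $\epsilon$. Observation~\ref{observation3} only controls $\epsilon\to 1$, so the finite-$\alpha$ advantage of the eigenstate must be re-established for intermediate noise through the slope-at-zero (or a continuity) argument, and one has to verify that the log-sum inequality never reverses on $(0,1)$. The genuinely delicate point is the dependence on dimension: the purity-sum gap $1+1/d-2P_\infty$ shrinks with $d$ and closes at $d=2$, where the interchangeability of $A$ and $B$ ties the candidate with a whole family of competitors at quadratic order in $(1-\epsilon)$. There the finite-order part of the argument must be pushed to the next (cubic) order to locate the crossing, and this borderline is the most technical step of the proof.
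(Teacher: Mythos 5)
Your opening move is the same as the paper's: by Corollary~\ref{corol:hinf} any universal MUS in $\B_d^\epsilon$ must be $\rho^\epsilon_{\psi_\infty}$, and the competitor you pit against it --- the $\epsilon$-pseudo-pure state built on an eigenstate of $A$ --- is exactly the paper's competitor. The gap is in the decisive step, namely showing that this competitor strictly beats $\rho^\epsilon_{\psi_\infty}$ for some finite-order entropy at \emph{every} $\epsilon\in[0,1)$. Observation~\ref{observation3} only delivers this in the high-noise limit: for each $\alpha$ it guarantees an $\epsilon_\alpha$ beyond which the eigenstate wins, which says nothing about a fixed intermediate $\epsilon$. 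Your patch --- comparing the slopes of $\alpha\mapsto H_\alpha$ at $\alpha\to 0^+$ via the log-sums $\sum_{ij}\ln(p^A_ip^B_j)$ --- is plausible but is asserted rather than proved; you yourself note that one ``has to verify that the log-sum inequality never reverses on $(0,1)$,'' and without that verification the theorem is not established for intermediate noise. Worse, the $d=2$ case is squarely within the scope of the theorem (mutually unbiased qubit observables, $\gamma=\pi/2$, are even invoked in Sec.~\ref{sec:qubit} as the reason the required noise tends to $1$), and there your quadratic-order mechanism collapses: as you compute, $1+1/d-2P_\infty=0$ at $d=2$, so Observation~\ref{observation3} is inapplicable and the promised cubic-order analysis is not carried out. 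As it stands the proposal proves the statement only for $d\geq 3$ and $\epsilon$ sufficiently close to $1$.

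For contrast, the paper avoids all of this with one known input: for mutually unbiased observables the Shannon entropies obey the tight bound $H_1(\v{p}^A(\rho))+H_1(\v{p}^B(\rho))\geq \ln d+S(\rho)$ of Ref.~\cite{korzekwa2014quantum}, which is saturated, among $\epsilon$-noisy states, precisely by the eigenstate pseudo-pure states. A direct calculation shows $\rho^\epsilon_{\psi_\infty}$ has strictly larger $H_1$, so by Corollary~\ref{corol:hinf} no universal MUS exists --- uniformly in $\epsilon$ and in $d$, including $d=2$. If you want to salvage your route, replacing the asymptotic Observation~\ref{observation3} step with an exact, all-$\epsilon$ comparison of a single entropy (as the paper does with $H_1$) is the missing ingredient.
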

\begin{proof}
	Note that the Shannon entropy \mbox{$H_1(\v{p}^A(\rho)\otimes\v{p}^B(\rho))$} is minimised among $\epsilon$-noisy states by $\epsilon$-pseudo-pure state given in Eq~\eqref{eq:noisy_state} with $\ket{\psi}$ being an eigenstate of either $A$ or $B$. This is because such states saturate the tight bound \mbox{$\ln d+S(\rho)$} found in Ref.~\cite{korzekwa2014quantum} for mixed states in the case of mutually unbiased observables. Then, by direct calculation one can check that for an $\epsilon$-pseudo-pure state defined in Lemma~\ref{lemma:Hinf}, the Shannon entropy is higher. Hence such a state cannot be a universal MUS and so, from Corollary~\ref{corol:hinf}, no universal MUS exists. 
\end{proof}

For higher dimensional systems $d\geq 3$ we can provide a general no-go result severely limiting the existence of universal MUS in the presence of noise for the choices $\mathcal{F}=\mathcal{S}$ and $\mathcal{F}=\mathcal{U}$. Specifically, we have:
\begin{theorem}
		Consider two observables $A$ and $B$ with eigenstates $\{\ket{a_i}\}$ and $\{\ket{b_j}\}$ such that \mbox{$V_{ij}=\langle a_i | b_j \rangle\neq 0$}. Then, if the dimension of the system is $d\geq 3$, the introduction of noise does not lead to the emergence of universal MUS for $\mathcal{F}=\mathcal{S}$ and $\mathcal{F}=\mathcal{U}$.
	\end{theorem}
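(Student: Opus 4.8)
The plan is to collapse the two cases to one and then exploit, for $d\ge 3$, a structural mismatch between the state that minimises $H_\infty$ and the state preferred by the collision entropy $H_2$.

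First I would note that every R\'enyi entropy is Schur-concave, so $\mathcal{U}\subseteq\mathcal{S}$; hence any universal MUS for $\mathcal{S}$ is automatically a universal MUS for $\mathcal{U}$, and it suffices to prove non-existence for $\mathcal{F}=\mathcal{U}$, the case $\mathcal{F}=\mathcal{S}$ following by contraposition. By Corollary~\ref{corol:hinf}, if a universal MUS exists in $\B_d^\epsilon$ it must be the $\epsilon$-pseudo-pure state built on $\ket{\psi_\infty}\propto\ket{a_i}+e^{-i\phi}\ket{b_j}$. A universal MUS for $\mathcal{U}$ must, in particular, have strictly smaller $H_2(\v{p}^A\otimes\v{p}^B)$ than every other state in $\B_d^\epsilon$ at the fixed $\epsilon$; I would refute exactly this by exhibiting a pseudo-pure competitor with no larger $H_2$.

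By additivity, $H_2(\v{p}^A(\rho^\epsilon)\otimes\v{p}^B(\rho^\epsilon))$ is a strictly decreasing function of each of the two purities $P_A(\psi)=\sum_i|\braket{a_i}{\psi}|^4$ and $P_B(\psi)=\sum_j|\braket{b_j}{\psi}|^4$, so minimising it over $\ket{\psi}$ is a constrained maximisation whose stationarity condition on the unit sphere reads $T_\psi\ket{\psi}\propto\ket{\psi}$, with $T_\psi=c_A\sum_i p_i^A\ketbra{a_i}{a_i}+c_B\sum_j p_j^B\ketbra{b_j}{b_j}$ and positive weights $c_A,c_B$ (in the large-noise limit Observation~\ref{observation3} collapses this to maximising $P_A+P_B$, making the special role of $H_2$ explicit). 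I would then evaluate $T_{\psi_\infty}\ket{\psi_\infty}$ and show it is not proportional to $\ket{\psi_\infty}$ once $d\ge 3$: since $V_{ij}\neq0$ for every pair, $\ket{\psi_\infty}$ has nonzero overlap with every $\ket{a_k}$ and every $\ket{b_l}$, so the terms $k\neq i$ and $l\neq j$ push components of $T_{\psi_\infty}\ket{\psi_\infty}$ out of the plane $\mathrm{span}\{\ket{a_i},\ket{b_j}\}$. For $d\ge 3$ one can pick $\ket{\xi}\perp\mathrm{span}\{\ket{a_i},\ket{b_j}\}$, whence $\braket{\xi}{\psi_\infty}=0$ while $\matrixel{\xi}{T_{\psi_\infty}}{\psi_\infty}$ collects exactly these out-of-plane contributions; if it is nonzero, $\ket{\psi_\infty}$ is not stationary, an infinitesimal rotation towards $\ket{\xi}$ strictly lowers $H_2$, and the resulting pseudo-pure state beats $\ket{\psi_\infty}$, contradicting universality. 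For $d=2$ no such $\ket{\xi}$ exists, which is precisely why qubits escape the no-go.

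The hard part will be the last implication: ruling out an accidental cancellation, i.e. proving that $\ket{w}:=c_A\sum_{k\neq i}p_k^A\braket{a_k}{\psi_\infty}\ket{a_k}+c_B\sum_{l\neq j}p_l^B\braket{b_l}{\psi_\infty}\ket{b_l}$ is never contained in $\mathrm{span}\{\ket{a_i},\ket{b_j}\}$ under the standing hypothesis $V_{ij}\neq0$. Each summand individually points out of the plane, but coherent cancellations between the $A$- and $B$-contributions are a priori possible, so I would either extract an explicit admissible $\ket{\xi}$ with $\matrixel{\xi}{T_{\psi_\infty}}{\psi_\infty}\neq0$ directly from the non-degeneracy encoded in $V_{ij}\neq0$, or argue that the cancellation locus is non-generic. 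Establishing this non-vanishing cleanly is the crux on which the whole theorem rests.
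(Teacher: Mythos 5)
Your reduction to $\mathcal{F}=\mathcal{U}$ and your use of Corollary~\ref{corol:hinf} to pin the candidate down to $\rho^\epsilon_{\psi_\infty}$ are both correct and match the paper. However, the step you yourself flag as ``the crux'' is a genuine, unclosed gap, not a technicality. To run your argument you must prove, for \emph{every} pair of observables with $V_{ij}\neq 0$ and $d\geq 3$, that $\ket{\psi_\infty}$ fails to be the maximiser (indeed, fails to be a suitable stationary point) of the $H_2$-functional, i.e.\ of $e^{-H_2(\v{p}^A)}+e^{-H_2(\v{p}^B)}$ in the large-noise limit. You offer no proof that the out-of-plane component $\matrixel{\xi}{T_{\psi_\infty}}{\psi_\infty}$ is nonzero, and ``the cancellation locus is non-generic'' would in any case only yield the theorem for generic observables, not for all observables satisfying the hypothesis. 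There is a structural warning sign that this cannot be patched cheaply: since $H_2$ and $H_1$ (used in Observation~\ref{observation3}) are both of positive order, a successful completion of your argument would prove the no-go for $\mathcal{F}=\mathcal{U}_+$ as well --- precisely the statement the paper is \emph{unable} to establish analytically and instead only supports numerically (Lemma~\ref{theorem:H2} and the Haar-random sampling in Sec.~\ref{sec:approximate}, where $|\braket{\psi_{\rm opt}}{\psi_\infty}|\approx 0.99$, i.e.\ the two states are nearly coincident, which is exactly why a clean analytic separation is delicate).

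The paper's actual proof sidesteps $H_2$ entirely and uses the support structure via $H_{-\infty}$, which is why the theorem is restricted to $\mathcal{S}$ and $\mathcal{U}$. Since $V_{ij}\neq 0$ for all $i,j$, the state $\ket{\psi_\infty}\propto\ket{a_i}+e^{-i\phi}\ket{b_j}$ has nonzero overlap with every $\ket{a_k}$ and every $\ket{b_l}$, so every entry of $\v{p}^A(\rho^\epsilon_{\psi_\infty})$ and $\v{p}^B(\rho^\epsilon_{\psi_\infty})$ strictly exceeds $\epsilon/d$, giving $H_{-\infty}(\v{p}^A\otimes\v{p}^B)>2\ln(\epsilon/d)$. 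For $d\geq 3$ one takes $\ket{\xi}=a_1\ket{a_1}+a_2\ket{a_2}$ orthogonal to $\ket{b_1}$: then $p^A_3=0$ and $p^B_1=0$, so the corresponding pseudo-pure state attains $H_{-\infty}=2\ln(\epsilon/d)$ exactly, strictly beating $\rho^\epsilon_{\psi_\infty}$ on a measure contained in both $\mathcal{S}$ and $\mathcal{U}$. This is a two-line computation with no stationarity analysis and no cancellation issue; if you want to salvage your write-up, the cleanest fix is to replace your $H_2$ competitor with this $H_{-\infty}$ competitor.
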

	\begin{proof}
		From Corollary~\ref{corol:hinf} we know that a noisy universal MUS must be of the form specified by Eq.~\eqref{eq:rho_Hinf}, so that
		\begin{equation}
		\label{eq:hminus}
		H_{-\infty}(\v{p}^A(\rho^\epsilon_{\psi_\infty}) \otimes \v{p}^B (\rho^\epsilon_{\psi_\infty})) > 2 \ln \epsilon/d,
		\end{equation}
		the inequality being strict because $V_{ij}\neq 0$. Now consider the state \mbox{$\rho^\epsilon_\zeta = \epsilon \iden/d+(1-\epsilon)\ket{\xi}\bra{\xi}$}, where \mbox{$\ket{\xi} = a_1 \ket{a_1} + a_2 \ket{a_2}$} and $a_1$, $a_2$ are chosen such that $\ket{\xi}$ is orthogonal to $\ket{b_1}$. Then computing the left hand side of Eq. \eqref{eq:hminus} for $\sigma^\epsilon$ gives exactly $2 \ln \epsilon/d$. Hence, the state minimising $H_{-\infty}$ does not coincide with $\rho^\epsilon_{\psi_\infty}$. From Corollary~\ref{corol:hinf} it implies that no universal MUS exists for $\mathcal{F}=\mathcal{S}$ and $\mathcal{F}=\mathcal{U}$.
	\end{proof}

The general results above say nothing about two distinct qubit observables when they are not mutually unbiased. In fact, as often happens, qubits are special and thus will be investigated in the next section. We will then conclude by suggesting a possible approximate notion of universality for higher dimensions and the choice \mbox{$\mathcal{F}=\mathcal{U}_+$}.

\subsection{Universal MUS for qubit systems}
\label{sec:qubit}

As all considered uncertainty measures depend only on the eigenstates of the observables and not on their eigenvalues, without loss of generality we can choose qubit observables $A=\v{a}\cdot\v{\sigma}$ and $B=\v{b}\cdot\v{\sigma}$, where $\v{\sigma}$ denotes the vector of Pauli operators, while $\v{a}$ and $\v{b}$ are the Bloch vectors. Let us also denote the angle between the two Bloch vectors by $\gamma$, so that \mbox{$\v{a}\cdot\v{b}=\cos\gamma$}.

\begin{figure}[t!]
\includegraphics[width=\columnwidth]{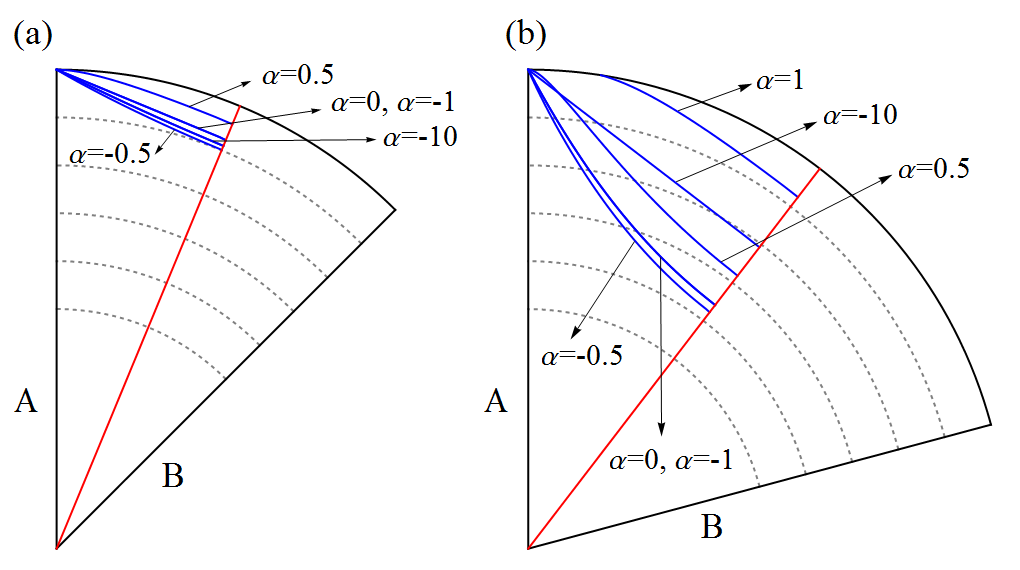}
\caption{\label{fig:qubit_MUS} 
The blue curves described by arrows show the position of MUS with respect \mbox{$\alpha$-R{\'e}nyi} entropies of different orders, as a function of the purity (distance from the origin) for qubit observables $A$ and $B$. $A$ and $B$ are separated by angle $\gamma=\pi/4$ [panel (a)] and $\gamma=5\pi/12$ [panel (b)], and the plot presents the Bloch sphere in the first quadrant of the plane spanned by eigenvectors of $A$ and $B$. The bisecting line (in red) denotes the set of states approached by all MUS and, ultimately, containing the universal MUS. For simplicity we only plot MUS for $\theta\in[0,\gamma/2]$, as the case $\theta\in[\gamma/2,\gamma]$ is symmetric. Note that MUS with respect to $\alpha\geq 1$ [panel (a)] and $\alpha\geq 2$ [panel (b)] lie on the bisecting line for all purities. Dashed grey lines correspond to states with fixed purity (length of the Bloch vector equal to $1-n/10$ for \mbox{$n\in\{1,\dots 5\}$}).}
\end{figure} 

The detailed analysis and calculations can be found in Appendix~H and here we will only state the main results. First of all, for the choice $\mathcal{F}=\mathcal{S}$ (the framework of majorisation uncertainty relations) there is no universal MUS in the presence of noise, i.e., no amount of noise $\epsilon<1$ can lead to the emergence of such state. However, for the two other choices of the family of uncertainty functions ($\mathcal{F}=\mathcal{U}$ and $\mathcal{F}=\mathcal{U}_+$) universal MUS may emerge after introducing a threshold amount of noise. Specifically, in Appendix~G we prove that for $\gamma=\pi/4$ the amount of noise $\epsilon=1/2$ leads to the emergence of universal MUS.

In panel (a) of Fig. \ref{fig:qubit_MUS} we illustrate this emergence of a universal MUS with the introduction of noise. We plot the states that for a given purity minimise different \mbox{$\alpha$-R{\'e}nyi} entropies. As can be seen in the Figure, above a threshold level of noise all the R{\'e}nyi entropies are minimised by a state described by a Bloch vector lying on the bisection of the angle $\gamma$. Note that, according to a numerical investigation, the level of noise $\epsilon=1/2$ used in the proof is actually much larger than required. In panel (b) of Fig. \ref{fig:qubit_MUS} we similarly plot the position of MUS for different $\alpha$, but in the case of qubit observables separated by $\gamma=5\pi/12$. Notice that now the amount of noise required for universal MUS to appear is larger and, from Theorem~\ref{lemma:mub}, one can expect that it grows with $\gamma$ up to the point when $\epsilon=1$ for $\gamma=\pi/2$ (corresponding to mutually unbiased bases). A numerical investigation supports the conjecture that universal MUS exist for generic qubit observables.

\subsection{Approximate notion of universality for higher dimensions}
\label{sec:approximate}

For dimensions $d\geq3$ the no-go theorems presented so far do not apply to the choice $\mathcal{F}=\mathcal{U}_+$ (corresponding to uncertainty measures that are both continuous and context-independent) when the considered observables are not mutually unbiased. As in higher dimensions the analytical verification of the emergence of $\epsilon$-noisy universal MUS becomes extremely complicated, we numerically verify whether universal MUS could emerge with the introduction of noise. The following lemma provides an additional necessary condition for such emergence that can be checked numerically by only investigating pure states:
\begin{lem}
	\label{theorem:H2}
	A necessary condition for a universal MUS to emerge with the introduction of noise is that the expression
	\begin{equation}
	\label{eq:num_cond}
	e^{-H_{2}(\v{p}^A(\rho))} + e^{-H_2(\v{p}^B(\rho))} 
	\end{equation}
	is maximised among pure states by $\rho=\ketbra{\psi_\infty}{\psi_\infty}$ (for the definition of $\ket{\psi_\infty}$ see Lemma~\ref{lemma:Hinf}).
\end{lem}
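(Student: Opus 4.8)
The plan is to prove the contrapositive: show that if the quantity $G(\rho):=e^{-H_2(\v{p}^A(\rho))}+e^{-H_2(\v{p}^B(\rho))}$ is \emph{not} maximised among pure states by $\ket{\psi_\infty}$, then no universal MUS can emerge. The entry point is Corollary~\ref{corol:hinf}: it already pins the \emph{only} possible candidate for an $\epsilon$-noisy universal MUS to the pseudo-pure state $\rho^\epsilon_{\psi_\infty}$ built on $\ket{\psi_\infty}$. Hence it suffices to exhibit a single competitor that defeats this candidate whenever $G$ fails to peak at $\ket{\psi_\infty}$. This is also why restricting the competitors to \emph{pure} states in the statement costs nothing: the strong concavity and additivity argument behind Corollary~\ref{corol:hinf} confines all relevant comparisons to pseudo-pure states, whose pure parts are pure.

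So suppose there is a pure state $\ket{\phi}$ with $G(\ket{\phi})>G(\ket{\psi_\infty})$, and let $\phi^\epsilon=\epsilon\iden/d+(1-\epsilon)\ketbra{\phi}{\phi}$ and $\psi_\infty^\epsilon=\epsilon\iden/d+(1-\epsilon)\ketbra{\psi_\infty}{\psi_\infty}$ denote the associated $\epsilon$-pseudo-pure states. Since $G$ is invariant under relabelling of outcomes ($H_2$ being permutation-invariant), this strict inequality already certifies that $\v{p}^A(\phi)\otimes\v{p}^B(\phi)$ and $\v{p}^A(\psi_\infty)\otimes\v{p}^B(\psi_\infty)$ are not permutations of one another, so the ``modulo permutations'' clause of Definition~\ref{universalmus} cannot rescue the candidate. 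I then apply Observation~\ref{observation3} with $\rho=\ketbra{\phi}{\phi}$ and $\sigma=\ketbra{\psi_\infty}{\psi_\infty}$: the non-degeneracy hypothesis $G(\phi)\neq G(\psi_\infty)$ holds, and condition~\ref{condition2obs} holds by assumption, so condition~\ref{condition1obs} follows. Instantiating at $\alpha=2$, which belongs to $\mathcal{F}$ for each of $\mathcal{S}$, $\mathcal{U}$ and $\mathcal{U}_+$ (as $H_2$ is Schur-concave, additive, and of positive order), I obtain a threshold $\epsilon_2<1$ with $H_2(\v{p}^A(\phi^\epsilon)\otimes\v{p}^B(\phi^\epsilon))<H_2(\v{p}^A(\psi_\infty^\epsilon)\otimes\v{p}^B(\psi_\infty^\epsilon))$ for all $\epsilon\geq\epsilon_2$.

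Finally I combine this with the meaning of emergence. Reading ``a universal MUS emerges with noise'' in the threshold sense used throughout (and explicitly in the qubit analysis), it says that above some $\tilde\epsilon<1$ the state $\rho^\epsilon_{\psi_\infty}$ is universally less uncertain than every competing $\epsilon$-pseudo-pure state; in particular $H_2(\v{p}^A(\psi_\infty^\epsilon)\otimes\v{p}^B(\psi_\infty^\epsilon))<H_2(\v{p}^A(\phi^\epsilon)\otimes\v{p}^B(\phi^\epsilon))$ for all $\epsilon\geq\tilde\epsilon$. Evaluating both inequalities at any admissible $\epsilon\geq\max(\tilde\epsilon,\epsilon_2)$, which is $<1$, yields a direct contradiction. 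Hence no such $\ket{\phi}$ can exist and $G$ must attain its maximum over pure states at $\ket{\psi_\infty}$, which is the claimed necessary condition.

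I expect the main obstacle to be conceptual rather than computational: it lies in fixing the notion of ``emergence'' sharply enough that Observation~\ref{observation3}, whose conclusion bites only in the large-noise regime $\epsilon\geq\epsilon_2$, can be applied. The natural and internally consistent choice — universality that, once present, holds for all sufficiently strong noise up to $\epsilon\to1$ — lets the thresholds $\tilde\epsilon$ and $\epsilon_2$ be simultaneously exceeded below $1$ and closes the argument. The remaining points are minor: checking that $\epsilon_2$ can be taken strictly below $1$, as provided by Observation~\ref{observation3}, and recalling Corollary~\ref{corol:hinf} to justify comparing only against pure competitors.
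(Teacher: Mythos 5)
Your proposal is correct and follows essentially the same route as the paper: assume a pure competitor $\ket{\phi}$ beats $\ket{\psi_\infty}$ on the $H_2$-based quantity, invoke Observation~\ref{observation3} to defeat the candidate $\rho^\epsilon_{\psi_\infty}$ at large noise for a single R\'enyi entropy contained in all three families, and conclude via Corollary~\ref{corol:hinf}. The only (immaterial) difference is that you instantiate Observation~\ref{observation3} at $\alpha=2$ whereas the paper uses $\alpha=1$; your added remarks on the ``modulo permutations'' clause and on the threshold reading of ``emergence'' are consistent with the paper's implicit reading.
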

\begin{proof}
	By contradiction, suppose that there exists a pure state $\phi \equiv \ketbra{\phi}{\phi}$ such that
	\begin{equation*}
	e^{-H_{2}(\v{p}^A(\psi_\infty))} + e^{-H_2(\v{p}^B(\psi_\infty))} < e^{-H_{2}(\v{p}^A(\phi))} + e^{-H_2(\v{p}^B(\phi))},
	\end{equation*}
	where $\psi_\infty \equiv \ketbra{\psi_\infty}{\psi_\infty}$. Let \mbox{$\rho^\epsilon_{\psi_\infty} = \epsilon \iden/d + (1-\epsilon)\psi_\infty$} and \mbox{$\rho^\epsilon_{\phi} = \epsilon \iden/d + (1-\epsilon) \phi$} be the corresponding \mbox{$\epsilon$-pseudo-pure} states and fix \mbox{$\alpha=1$}. Then by Observation~\ref{observation3} there exists $\tilde{\epsilon}$ such that for all $\epsilon\geq\tilde{\epsilon}$, one has $\Delta H_1 > 0$. This implies that, for any of the three choices of $\mathcal{F}$, there is no $\epsilon$ close enough to 1 such that $\rho^\epsilon_{\psi_\infty}$ is a universal MUS in $B^\epsilon_d$. However, by Corollary~\ref{corol:hinf}, this implies that a universal MUS does not exist within $B^\epsilon_d$ for any amount $\epsilon$ of noise introduced.
\end{proof}

Given two observables $A$ and $B$ we can now use the above result to numerically verify whether the universal MUS emerges with the introduction of noise. This can be done in the following way. First, we need to find $\ket{\psi_\infty}$. Then, using a numerical optimization procedure we search for a state $\ket{\psi_{\mathrm{opt}}}$ that minimises Eq.~\eqref{eq:num_cond}. Finally we can compare $\ket{\psi_\infty}$ with $\ket{\psi_{\mathrm{opt}}}$ and if these states differ we can conclude that no universal MUS exists for $A$ and $B$, even for the choice $\mathcal{F}=\mathcal{U}_+$.

We numerically investigate $d\in\{3,4,5\}$, each time generating 1000 pairs of observables $(A,B)$ whose eigenvectors are connected by a unitary, randomly chosen according to the Haar measure. Our analysis shows that $\ket{\psi_\infty}$ does not coincide with $\ket{\psi_{{\rm opt}}}$, showing that in general no universal MUS exists in higher dimension even with the choice $\mathcal{F}=\mathcal{U}_+$.\footnote{Even more strongly, no universal MUS exist whenever $\mathcal{F}$ contains both $H_\infty$ and any $H_\alpha$ for finite $\alpha$, e.g., $\mathcal{F}= \{H_1,H_\infty\}$.} However, we also observe that the two states are very close. More precisely, we found that their average overlap $|\!\braket{\psi_{\mathrm{opt}}}{\psi_\infty}\!|$ is equal to $0.9996$, $0.9904$ and $0.9842$ for dimension $d$ equal to 3, 4 and 5, respectively. From Observation~\ref{observation3} we know that for any given $\alpha \in (0, +\infty)$, $\rho^\epsilon_{\mathrm{opt}} = \epsilon \iden/d + (1-\epsilon)\ketbra{\psi_{\mathrm{opt}}}{\psi_{\mathrm{opt}}}$ has smaller $H_{\alpha}(p^A \otimes p^B)$ than any other given pseudo-pure state if $\epsilon$ is taken to be bigger than some $\epsilon_{\alpha}<1$. So, for any arbitrarily fine sample of $\alpha$'s and pseudo-pure states, there would be some $\tilde{\epsilon}$ small enough such that $\rho^{\tilde{\epsilon}}_{\mathrm{opt}}$ is the best pseudo-pure state. At the same time, the case $\alpha = \infty$ is optimised by the pseudo-pure state $\rho^\epsilon_{\psi_\infty}$ that, as we said above, has pure component with large overlap with $\ket{\psi_{\mathrm{opt}}}$. This leads naturally to the conjecture that noise can lead to the emergence of a universal MUS in an approximate sense. We leave this as an interesting open question for future work.

In Fig. \ref{fig:qutrit_MUS} we present an example of the emergence of such approximate universal MUS for a qutrit system. We choose the observables $A$ and $B$ such that the eigenstates of $B$ are connected to the eigenstates of $A$ by a rotation around $(1,1,1)$ axis by angle $\pi/6$. As can be seen in panel (a) of Fig.~\ref{fig:qutrit_MUS}, without noise the candidate universal MUS $\rho^\epsilon_{\psi_\infty}$ has larger R\'{e}nyi entropies of order $\alpha<1$ than the optimal state. For example, for zero noise the optimal states for $H_{0.1}$ are close to the eigenstate of either $A$ or $B$. However, the introduction of noise $\epsilon=0.25$ results in the approximate equality (discrepancy on the order of $10^{-4}$ at worst) between R\'{e}nyi entropies of the candidate and optimal state for the investigated region of $\alpha\in[0,2]$. Numerical investigations also show that this approximate equality holds for $\alpha>2$.

\begin{figure}[t!]
	\includegraphics[width=\columnwidth]{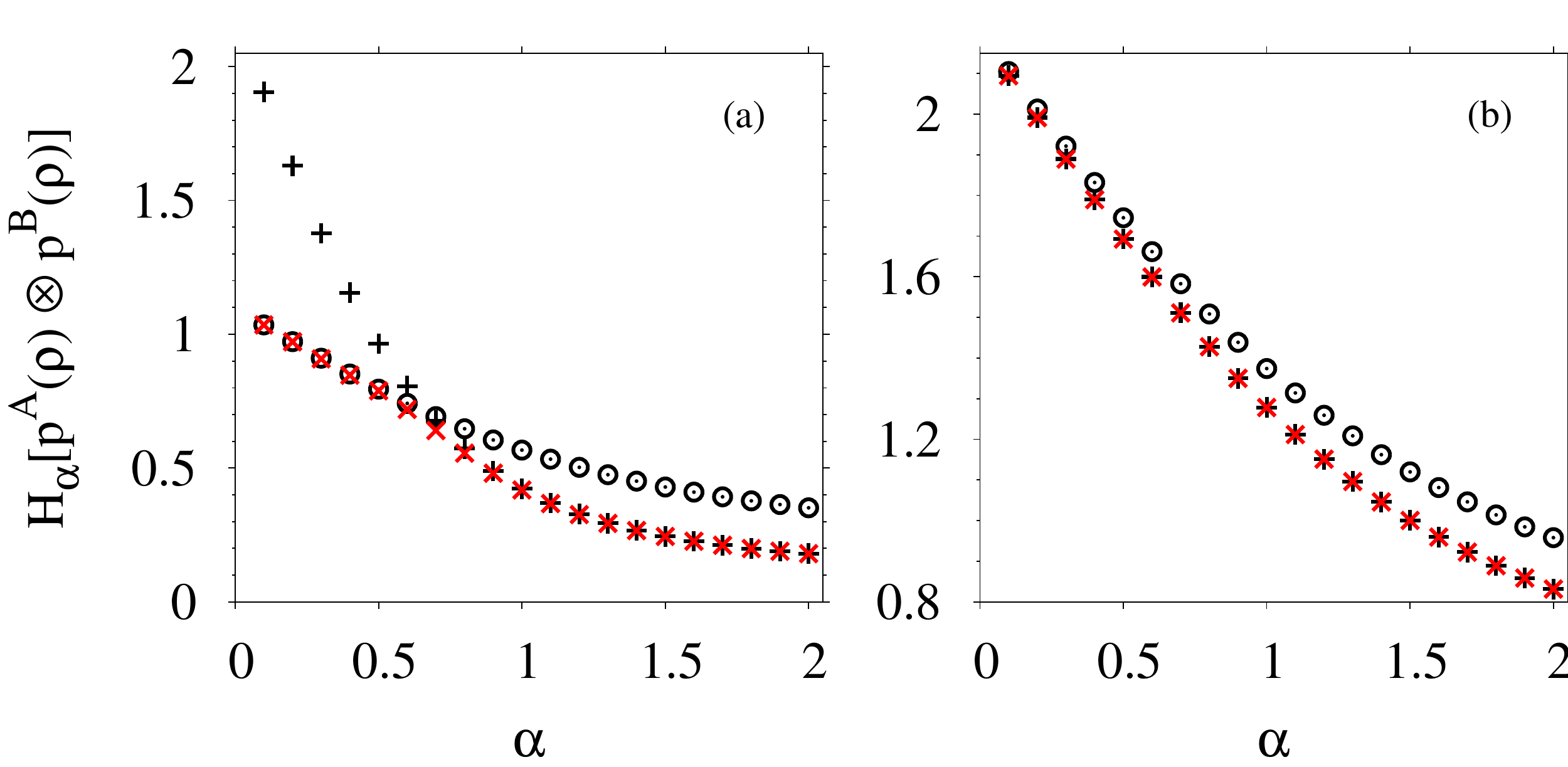}
	\caption{\label{fig:qutrit_MUS} R{\'e}nyi entropies for $\rho^\epsilon_\infty$ (black pluses), the eigenstate of $A$ (black circles) and numerically optimized $\epsilon$-noisy states minimising a given $\alpha$-R{\'e}nyi entropy (red X's), for the qutrit system and observables $A$ and $B$ as described in the main text. (a) Level of noise $\epsilon=0$; (b) Level of noise $\epsilon=0.25$.}
\end{figure}

\section{Thermodynamic considerations}

Although this work is mainly focused on uncertainty relations, we devote this section to point out the links between measures of uncertainty in information theory and measures of the departure from equilibrium in thermodynamics. This allows us to extend our observation on the role of collision entropy to the field of thermodynamics and point out some interesting properties linked to thermalisation.

\subsection{Measuring the departure from equilibrium}

Given a system described by a free Hamiltonian $H_S$ and in contact with a heat bath at inverse temperature \mbox{$\beta=(kT)^{-1}$}, we can introduce the non-equilibrium $\alpha$-free energy functionals~\cite{brandao2013second}
\begin{equation}
\label{eq:freeenergies}
F_{\alpha}(\rho)= - kT \ln Z_S + kT S_{\alpha}(\rho||\gamma_S),
\end{equation}
where $Z_S = \tr{}{e^{-\beta H_S}}$ is the partition function, \mbox{$\gamma_S = e^{-\beta H_S}/ Z_{H_S}$} is the equilibrium Gibbs state of the system and $S_\alpha$ denote $\alpha$-R{\'e}nyi divergences whose definitions we recall in Appendix~A. Notice that for equilibrium states all $F_\alpha$ coincide with the thermodynamic free energy. In fact, $kT S_{\alpha}(\rho||\gamma_S)$ can be interpreted as a non-equilibrium contribution to the free energy. In what follows we will focus only on ``classical'' non-equilibrium states that are diagonal in the energy eigenbasis and can be characterised by the distribution over energy eigenstates $\rho \leftrightarrow \v{p}$. Note that transformations between such states can be described in general by a stochastic matrix $\Lambda$ acting on the probability vector $\v{p}$ describing the state.

Let us first convince the reader that $\alpha$-free energy functionals are not some arbitrary measures of non-equilibrium, but are of fundamental importance and are linked with R{\'e}nyi entropy measures of uncertainty. In fact, a similar reasoning to the one showing that $H_\alpha$ constitute a canonical family of uncertainty functions satisfying two simple and natural axioms can be applied to the quantification of non-equilibrium, with $F_\alpha$ playing the role of $H_\alpha$. In order to introduce the thermodynamic analogue of the axioms \ref{axiom1} and \ref{axiom2} from Sec.~\ref{sec:restricting}, let us denote by $\Lambda_T$ any Gibbs-preserving map, i.e., any stochastic map satisfying $\Lambda_T(\v{\gamma}) = \v{\gamma}$, where $\v{\gamma}$ is the probability vector describing occupation of energy eigenstates in thermal equilibrium. Gibbs-preserving maps are the most general set of transformations between ``classical'' states that can be performed without using work~\cite{faist2015gibbs, lostaglio2015quantum}. In fact, any operation outside this set brings an initially thermal state out of equilibrium, which would allow for building a perpetuum mobile of the second kind by extracting work from a single heat bath, thus violating the second law of thermodynamics. We then require all functions $f$ that quantify departure of the system from thermodynamic equilibrium to satisfy the following two axioms:
\begin{enumerate}
\item \label{eq:freecon1} $f(\Lambda_T(\v{p})) \leq f(\v{p})$.
\item $f(\v{p} \otimes \v{q}) = f(\v{p}) + f(\v{q})$.
\end{enumerate}
The first axiom requires $f$ to be monotonically decreasing under Gibbs-preserving maps. As stated above, Gibbs-preserving maps can be performed at zero work cost. Hence, if we could bring a system farther out from the equilibrium for free (by increasing $f$ using $\Lambda_T$), we could then thermalise it back and extract positive work, thus building a perpetuum mobile. The second axiom requires measures of non-equilibrium to be additive for independent systems. 

Using the reasoning presented in Section~\ref{sec:restricting} and the results of~\cite{ruch1978mixing}, one finds that the first of the above axioms implies that $f$ must respect the ordering induced by a thermodynamic generalisation of the notion of majorisation, called thermo-majorisation~\cite{horodecki2013fundamental} [i.e., if $\v{p}$ thermo-majorizes $\v{q}$ then $f(\v{p})\geq f(\v{q})$]. The second requirement then leads us to a thermodynamic analogue of the notion of trumping, which in~\cite{brandao2013second} was proven to be characterised exactly by the non-equilibrium free energies defined by Eq.~\eqref{eq:freeenergies}. Hence $\{F_\alpha\}$ play the same canonical role in quantifying the departure from equilibrium as the R{\'e}nyi entropies in the case of measuring uncertainty. Finally, notice that by requiring continuity we would restrict to positive order free energies.

The interest in these quantities also relies on several operational interpretations attached to them. The simultaneous decrease of all $\{F_\alpha \}$ is a necessary and sufficient condition for the existence of a thermal operation (defined in~\cite{janzing2000thermodynamic, brandao2011resource}) between two non-equilibrium incoherent states when auxiliary catalysts are allowed \cite{brandao2013second}. The decrease of all positive order free energies was also given an operational interpretation in \cite{brandao2013second}, in terms of catalytic thermal operation where one is allowed to borrow a qubit ancilla that is given back arbitrarily close to its initial state at the end of the transformation. Moreover, the $\alpha=1$ free energy 
\begin{equation*}
F_1(\rho)= \tr{}{\rho H_S} - kT \ln Z_{S},
\end{equation*}
is privileged in various ways: it is a bound for the average work that a system can perform while equilibrating with respect to a bath at temperature $T$~\cite{aberg2013truly}; it governs transformations in the ``thermodynamic limit''~\cite{brandao2011resource}; it was also recently shown to govern transformations where we can access a source of stochastic independence~\cite{lostaglio2015stochastic}.

\subsection{Near-equilibrium thermodynamics}

We will now translate Observation~\ref{lemma:ordering_by_2} from Section~\ref{sec:minimum} into the language of thermodynamics and analyse the consequences for near-equilibrium processes. In order to do this let us exchange the set of $\epsilon$-noisy states $\B^\epsilon_d$ with the set of $\epsilon$-thermal states:
\begin{equation}
\label{eq:epsilon_thermal}
\mathcal{T}^\epsilon_d := \{ \sigma^\epsilon = \epsilon \gamma_S + (1-\epsilon) \rho, \; \; \rho \in \mathcal{B}_d\}.
\end{equation}
Notice that Eq.~\eqref{eq:epsilon_thermal} describes states that are the outcome of an elementary model of thermalisation~\cite{scarani2002thermalising}. We then have the following result:
\begin{lemma}
\label{lemma:thermo_ordering}
Consider two quantum states $\rho$ and $\sigma$ diagonal in the energy eigenbasis (described by distributions $\v{p}$ and $\v{q}$, respectively) with $F_2(\rho)\neq F_2(\sigma)$. Then the following statements are equivalent:
\begin{enumerate}
\item For every $\alpha\in\mathbb{R}$ there exists $\epsilon_\alpha \in[0,1)$ such that
\begin{equation*}
F_{\alpha}(\rho^\epsilon) > F_{\alpha}(\sigma^\epsilon)\quad \forall\epsilon\geq\epsilon_\alpha.
\end{equation*}
\item $F_2(\rho) > F_2(\sigma)$.
\end{enumerate}
\end{lemma}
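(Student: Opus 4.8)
The plan is to recognise this as the exact thermodynamic transcription of Observation~\ref{lemma:ordering_by_2} and to reuse the Taylor-expansion argument of Appendix~C under the dictionary that sends the uniform distribution $\iden/d$ to the Gibbs state $\gamma_S$, the R\'enyi entropy $H_\alpha$ to the R\'enyi divergence $S_\alpha(\cdot\|\v{\gamma})$, and the collision entropy $H_2$ to $F_2$. First I would strip off the irrelevant constant: since $F_\alpha(\rho^\epsilon)=-kT\ln Z_S+kT\,S_\alpha(\v{p}^\epsilon\|\v{\gamma})$ with $kT>0$, the inequality $F_\alpha(\rho^\epsilon)>F_\alpha(\sigma^\epsilon)$ is equivalent to $S_\alpha(\v{p}^\epsilon\|\v{\gamma})>S_\alpha(\v{q}^\epsilon\|\v{\gamma})$, where $\v{p}^\epsilon=\epsilon\v{\gamma}+(1-\epsilon)\v{p}$. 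The entire problem thus reduces to comparing two R\'enyi divergences from $\v{\gamma}$ as $\epsilon\to1$, i.e.\ as both states are driven to equilibrium.

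Next I would set $\delta=1-\epsilon$ and write $p_i^\epsilon=\gamma_i+\delta\,d_i$ with $d_i=p_i-\gamma_i$ and $\sum_i d_i=0$. A pleasant simplification over the entropic case is that the Gibbs state has full support, so every $\gamma_i>0$ and the expansion is free of the non-continuity pathologies that afflict R\'enyi entropies of distributions with zeros. Expanding $\sum_i (p_i^\epsilon)^\alpha\gamma_i^{1-\alpha}=\sum_i\gamma_i(1+\delta d_i/\gamma_i)^\alpha$ to second order, the zeroth-order term is $1$, the linear term vanishes by $\sum_i d_i=0$, and the quadratic term is $\tfrac{\alpha(\alpha-1)}{2}\delta^2\chi^2(\v{p})$ with $\chi^2(\v{p})=\sum_i (p_i-\gamma_i)^2/\gamma_i$ the Pearson divergence from equilibrium. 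Substituting into $S_\alpha=\tfrac{1}{\alpha-1}\ln[\cdots]$ yields the key expansion
\begin{equation*}
S_\alpha(\v{p}^\epsilon\|\v{\gamma})=\frac{\alpha}{2}\,\delta^2\,\chi^2(\v{p})+O(\delta^3).
\end{equation*}

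The decisive structural point, exactly as in Observation~\ref{lemma:ordering_by_2}, is that the leading non-trivial term factorises into an $\alpha$-dependent prefactor $\alpha/2$ times a single state functional $\chi^2$ that does \emph{not} depend on $\alpha$. Since $S_2(\v{p}\|\v{\gamma})=\ln\!\big(1+\chi^2(\v{p})\big)$ is a strictly increasing function of $\chi^2$, the ordering of $\chi^2(\v{p})$ against $\chi^2(\v{q})$ coincides with the ordering dictated by $F_2$. Hence for $\epsilon$ close enough to $1$ one has $F_\alpha(\rho^\epsilon)-F_\alpha(\sigma^\epsilon)\simeq kT\,\tfrac{\alpha}{2}\,\delta^2\,[\chi^2(\v{p})-\chi^2(\v{q})]$, whose sign is governed solely by $F_2$; for the genuine non-equilibrium free energies ($\alpha>0$, where the prefactor $\alpha/2>0$) this gives precisely the claimed equivalence with $F_2(\rho)>F_2(\sigma)$.

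The hard part is not the formal expansion but the rigorous control of the $\epsilon\to1$ limit. I would need to (i) verify that the quadratic term genuinely dominates the remainder, which is exactly where the hypothesis $F_2(\rho)\neq F_2(\sigma)$ enters: it guarantees $\chi^2(\v{p})\neq\chi^2(\v{q})$, so the leading difference is nonzero and cannot be overturned by the $O(\delta^3)$ tail once $\delta$ is small enough, delivering the threshold $\epsilon_\alpha<1$; and (ii) handle the removable singularity of $1/(\alpha-1)$ at $\alpha=1$, treating $F_1$ (the standard free energy, whose non-equilibrium part is the relative entropy) by the continuous limit, under which the coefficient $\alpha/2\to1/2$ and the same conclusion persists. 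As in the classical discussion (Appendix~D), the limiting cases $\alpha=\pm\infty$ fall outside this finite-$\alpha$ analysis and their ordering is insensitive to the admixed equilibrium noise.
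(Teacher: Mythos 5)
Your route is exactly the one the paper intends: it declares Observation~4 ``a trivial generalisation of Observation~1'' and omits the proof, so the content is precisely the Appendix~C Taylor expansion with $\iden/d\to\gamma_S$ and $H_2\to F_2$, which you carry out correctly (including the useful identity $S_2(\v{p}\|\v{\gamma})=\ln(1+\chi^2(\v{p}))$ and the observation that full support of $\v{\gamma}$ removes the boundary pathologies).

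There is, however, one genuine defect: you drop the $\mathrm{sgn}(\alpha)$ from the paper's definition $S_\alpha(\v{p}\|\v{q})=\frac{\mathrm{sgn}(\alpha)}{\alpha-1}\ln\sum_i p_i^\alpha q_i^{1-\alpha}$, and as a consequence you obtain the prefactor $\alpha/2$ and restrict the conclusion to ``the genuine non-equilibrium free energies ($\alpha>0$)''. The lemma asserts the equivalence for \emph{every} $\alpha\in\mathbb{R}$, so as written your argument does not prove the stated claim and even suggests (wrongly) that the ordering reverses for $\alpha<0$. With the sign factor restored the leading term is $\frac{\mathrm{sgn}(\alpha)\,\alpha}{2}\,\delta^2\chi^2(\v{p})=\frac{|\alpha|}{2}\,\delta^2\chi^2(\v{p})$, whose coefficient is positive for all $\alpha\neq 0$, so the sign of $F_\alpha(\rho^\epsilon)-F_\alpha(\sigma^\epsilon)$ near $\delta=0$ is governed by $\chi^2(\v{p})-\chi^2(\v{q})$ uniformly in $\alpha$, negative orders included; the cases $\alpha=0,1$ are then handled by the corresponding limits, exactly as in Appendix~C. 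Once that correction is made, the rest of your argument (the role of the hypothesis $F_2(\rho)\neq F_2(\sigma)$ in both directions of the equivalence, and the exclusion of $\alpha=\pm\infty$) is sound.
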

\noindent The proof is a trivial generalisation of Observation~\ref{lemma:ordering_by_2} and it is hence omitted.

Observation~\ref{lemma:thermo_ordering} provides operational meaning to a so-far neglected thermodynamic quantity: the $\alpha=2$ free energy defined by
\begin{equation*}
F_2(\rho) = - kT \ln Z_{S} + k T \ln \sum_i \frac{p^2_i}{\gamma_i},
\end{equation*}
where $p_i$ are the eigenvalues of $\rho$, \mbox{$\gamma_i=e^{-\beta E_i}/Z_S$} and $\{E_i\}$ is the set of the eigenvalues of $H_S$. Notice that $F_2$ is linked to the thermal average of $(p_i/\gamma_i)^2$. Given any valid measure of non-equilibrium, for $\epsilon$ large enough $F_2$ determines which of two the states, $\rho^\epsilon$ or $\sigma^\epsilon$ [partially thermalised versions of states $\rho$ and $\sigma$, see Eq.~\eqref{eq:epsilon_thermal}], is farther from thermal equilibrium. Hence, $F_2$ provides an ordering between different near-equilibrium states.

As an application of Observation~\ref{lemma:thermo_ordering}, consider a system in state $\rho$ that we want to transform into a target state $\sigma$ by putting it in contact with a heat bath. Let us assume that \mbox{$F_1(\rho) < F_1 (\sigma)$}, so that such transformation is forbidden by the second law of thermodynamics. However, it is still possible that \mbox{$F_2(\rho) > F_2 (\sigma)$}. If this is the case then close enough to equilibrium we have \mbox{$F_1(\rho^\epsilon) > F_1 (\sigma^\epsilon)$}. This implies that we can transform many copies of $\rho^\epsilon$ into many copies of $\sigma^\epsilon$~\cite{brandao2011resource}. It also means that we can extract on average a positive amount of work by transforming $\rho^\epsilon$ into $\sigma^\epsilon$, even though work is required to transform $\rho$ to $\sigma$. Moreover, as mentioned above, for $\epsilon$ large enough we can transform $\rho^\epsilon$ into $\sigma^\epsilon$ by thermal operations using a source of stochastic independence. Finally, note that taking into account the thermalisation interpretation of Eq.~\eqref{eq:epsilon_thermal}, the reversal of free energy $F_1$ ordering between states $\rho$ and $\sigma$ can arise from a thermalisation process. For example, if initial states $\rho(0)$ and $\sigma(0)$, with $F_2(\rho(0)) > F_2(\sigma(0))$, thermalise according to Eq.~\eqref{eq:epsilon_thermal} at the same rate $\epsilon(t)$ [$\epsilon(t)$ monotonically increases with $t$], then for all times $t>t_r$ for some finite $t_r$ we will have \mbox{$F_1(\rho(t)) > F_1 (\sigma(t))$}.

\section{Conclusions and outlook}

Uncertainty relations quantify the impossibility of preparing a quantum state with the statistics of two non-commuting observables being simultaneously sharp. As such, they tell us something fundamental about the quantum world. On the other hand, whenever a given uncertainty relation is based on a specific choice of uncertainty measure, it is biased. Although this choice may be justified by other assumptions, it unavoidably limits the universality of the results obtained. 

The particular example of this problem that we focused on in this paper is the form of minimum uncertainty states (MUS). Already in the simplest case of two qubit observables $A=\v{a}\cdot\v{\sigma}$ and $B=\v{b}\cdot\v{\sigma}$ (using the standard notation introduced in Sec.~\ref{sec:qubit}) one easily finds that MUS are not unique and depend on the chosen measure. Indeed, if as a measure of uncertainty we choose the Shannon entropy of the outcome probabilities, pure MUS may be given by eigenvectors of either $A$ or $B$ \cite{sanches1998optimal,ghirardi2003optimal}; if we choose the min-entropy instead, pure MUS are always described by the Bloch vector lying in the middle between the two closest eigenvectors of $A$ and $B$.

Inspired by the recent ``universal'' approach to uncertainty relations~\cite{partovi2011majorization,friedland2013universal,puchala2013majorization,rudnicki2014strong}, we discussed minimal \emph{desiderata} for the set of uncertainty functions. More precisely, in this work we considered three axioms: monotonicity under random relabelling of events (as in Ref.~\cite{friedland2013universal}), additivity for independent random variables, and continuity. Imposing these requirements one after the other led us to three possible definitions of the set of information-theoretic uncertainty functions.

We then introduced the concept of \emph{universal} minimum uncertainty state. We defined it as quantum state that minimises uncertainty of the measurement outcome distributions of two non-commuting observables with respect to all information-theoretic uncertainty measures belonging to a chosen family. We have found that within the full state space such universal MUS do not exist in finite-dimensional Hilbert spaces for any choice of the family of uncertainty functions. 

This led us to consider the role of noise in uncertainty relations and the structure of minimal uncertainty states, and to identify the crucial role played by R\`enyi entropy of order $\alpha=2$. We have given an explicit example of universal MUS for qubits and argued for their generic existence in such systems. However, for higher dimensional systems we proved strong no-go theorems preventing the existence of universal MUS. Our partial results suggest, however, that an approximate, rather than exact, notion of universality may emerge.

From an operational rather than axiomatic point of view, it may be relevant to consider a family of uncertainty functions $\mathcal{F}$ given by a finite number of R\'enyi entropies with $\alpha \in (-\infty, \infty)$, e.g., \mbox{$\mathcal{F}=\{ H_{1/2}, H_1\}$}. In the presence of noise $\epsilon$, using strict concavity we know that the states minimising $H_{1/2}$ and $H_1$ must be $\epsilon$-pseudo-pure. Then one can numerically verify, e.g., for a qutrit system, that the pure component of $\rho^\epsilon_{\psi_{\rm opt}}$ is very close to the pure components of the states minimising $H_{1/2}$ and $H_1$, for some threshold amount of noise $\epsilon$. This shows that  $\rho^\epsilon_{\psi_{\rm opt}}$ is at least ``almost'' universally optimal with respect to $\mathcal{F}$.

We believe this may be a promising new avenue in the study of uncertainty relations; although each uncertainty measure has its own operational meaning relevant in a specific scenario, for qubit systems (and, in a weaker sense, also for higher dimensions), there exist states that are universally optimal in the presence of a strong enough uniform noise. Our observations may have consequences for quantum cryptography as, e.g., bounds on the knowledge that an eavesdropping party can acquire about the information encoded in non-orthogonal states can be derived from uncertainty relations~\cite{wehner2010entropic}. Hence, we conjecture that our work may be relevant, for example, when the eavesdropper does not have prior information about the number of times a given quantum channel will be used. We leave a deeper analysis of these connections for future work.

\bigskip

\textbf{Acknowledgements:} We would like to thank Antony Milne, Karol \.{Z}yczkowski and Raam Uzdin for helpful comments and discussions. We are also very grateful for the ongoing support provided by David Jennings and Terry Rudolph. Finally, we would like to mention the initial stimulating discussions with Sofia Qvarfort and Henry O'Hagan. This work was supported by EPSRC and in part by COST Action MP1209.

\section*{Appendix A - Mathematical background}

\subsection*{Majorisation}

Given a $d$-dimensional probability vector $\v{p}$, we will denote by $\v{p}^\downarrow$ the same vector but with elements rearranged in decreasing order. We now recall the definition of majorisation~\cite{marshall1979inequalities}
\begin{definitionApp}[Majorisation]
Given two probability distributions $\v{p}$ and $\v{q}$, we say that $\v{p}$ majorises $\v{q}$, and write $\v{p} \succ \v{q}$, if and only if
\begin{equation*}
\sum_{i=1}^n p^\downarrow_i \geq \sum_{i=1}^n q^\downarrow_i, \quad n=1,...,d-1.
\end{equation*}
\end{definitionApp}

\subsection*{R{\'e}nyi entropies}

R{\'e}nyi entropies for $d$-dimensional probability distributions are defined as~\cite{renyi1961measures, mueller2015generalization}
\begin{equation*}\renewcommand{\arraystretch}{1.5}
H_\alpha(\v{p}):=\left\{\begin{array}{cc}
\frac{\mathrm{sgn}(\alpha)}{1-\alpha}\ln\left(\sum_ip_i^\alpha\right)&\mathrm{for~}\alpha\neq 0,\\
\frac{1}{d}\sum_i\ln p_i&\mathrm{for~}\alpha=0.
\end{array}\right.
\end{equation*}
The $\alpha \rightarrow \pm \infty$ and $\alpha\rightarrow 1$ are defined by suitable limits,
\begin{eqnarray*}
H_1(\v{p})&=&-\sum_i p_i \ln p_i,\\
H_\infty(\v{p})&=& - \ln \max_i p_i,\\
H_{-\infty}(\v{p})&=& \ln \min_i p_i.
\end{eqnarray*}
Note that for $\alpha=0$ we defined $H_0$ as the Burg entropy, and not the limit of the R{\'e}nyi entropy for $\alpha\searrow 0$. Notice also the extension to negative $\alpha$.

\subsection*{Trumping}

Trumping is essentially a context-independent version of majorisation. It is sometimes also called catalytic majorisation~\cite{daftuar2001mathematical}: 
\begin{definitionApp}
We say that $\v{p}$ trumps $\v{q}$ and denote it by $\v{p}\succ_T\v{q}$ when 
\begin{equation*}
\exists \v{r}: \v{p}\otimes\v{r}\succ \v{q}\otimes\v{r}.
\end{equation*}
\end{definitionApp}
\noindent The results of~\cite{klimesh2007inequalities,turgut2007catalytic} link trumping and R{\'e}nyi entropies as follows. If $\v{p}\neq\v{q}$ then
\begin{equation*}
\v{p}\succ_T\v{q} \Longleftrightarrow H_\alpha(\v{p}) < H_\alpha(\v{q}), \quad \forall \alpha \in \mathbb{R}.
\end{equation*}
However, from the definition of trumping and the additivity of any Schur-concave additive function $u \in \mathcal{U}$, we have
\begin{equation*}
\v{p}\succ_T \v{q} \Longrightarrow u(\v{p}) < u(\v{q}), \quad \forall u \in \mathcal{U}.
\end{equation*}
This implies
\begin{equation*}
H_\alpha(\v{p}) < H_\alpha(\v{q}), \; \forall \alpha \in \mathbb{R} \Longrightarrow u(\v{p}) < u(\v{q}), \; \forall u \in \mathcal{U},
\end{equation*}
as anticipated in Sec.~\ref{sec:restricting}.

\subsection*{R{\'e}nyi divergences}

Given probability distributions $\v{p}$ and $\v{q}$, the $\alpha$-R{\'e}nyi divergence (or relative entropy) is defined as~\cite{renyi1961measures}:
\begin{equation}
\label{eq:renyidivergence}
S_\alpha(\v{p}||\v{q}) = \frac{\rm{sgn}(\alpha)}{\alpha -1}\ln \sum_i p_i^\alpha q_i^{1-\alpha},
\end{equation}
for $\alpha \neq \{0,1\}$. The values at $\alpha = \{0,1,\pm \infty\}$ are defined through Eq.~\eqref{eq:renyidivergence} by the correspondent limits and read~\cite{brandao2013second}
\begin{eqnarray*}
S_0(\v{p}\|\v{q}) = - \ln \sum_{i | p_i \neq 0} q_i, && S_1(\v{p}\|\v{q}) = \sum_i p_i \ln \frac{p_i}{q_i}, \\
S_{\infty}(\v{p}\|\v{q}) = \ln \max_i \frac{p_i}{q_i}, && S_{-\infty}(\v{p}\|\v{q}) = S_\infty(\v{q}\|\v{p}).
\end{eqnarray*}

\section*{Appendix B - Proof of Theorem~1}

\begin{proof}

Given any mixed state $\rho$, let us decompose it in its own eigenbasis $\{\ket{\psi_k}\}$:
\begin{equation*}
\rho = \sum_k \lambda_k \ketbra{\psi_k}{\psi_k}:=\sum_k \lambda_k \rho_k.
\end{equation*}
Using the notation introduced in Sec.~\ref{sec:minimum_1}, the strict concavity and additivity of Shannon entropy implies that
\begin{eqnarray*}
H(\v{p}^A(\rho)\otimes\v{p}^B(\rho)) &>& \sum_{k,l} \lambda_k\lambda_l H(\v{p}^A(\rho_k)\otimes\v{p}^B(\rho_l))\\
& = & \sum_k \lambda_k H(\v{p}^A(\rho_k)\otimes\v{p}^B(\rho_k))\\
& \geq & \min_k H(\v{p}^A(\rho_k)\otimes\v{p}^B(\rho_k)),
\end{eqnarray*}
so that for every mixed state $\rho$ there exists a pure state $\rho_k$ that is characterised by lower Shannon entropy. This immediately implies that no mixed state can be a universal MUS and hence we can consider only pure states.

We will now find the set of pure states $\{\ket{\psi^m_{\infty}}\}$ that minimise  \mbox{$H_\infty(\v{p}^A(\ket{\psi})\otimes\v{p}^B(\ket{\psi}))$} among all pure states $\ket{\psi}$. Let us remind that \mbox{$H_\infty(\v{p})=-\ln\max_i p_i$}, so that we are actually looking for states that maximise the largest entry of the probability vector \mbox{$\v{p}^A(\ket{\psi})\otimes\v{p}^B(\ket{\psi})$}. Let $V$ be the unitary connecting eigenbases of $A$ and $B$, i.e., $\ket{a_i}=V\ket{b_i}$ for all $i=1,...,d$. Since, by assumption, $A$ and $B$ do not share an eigenstate we have \mbox{$\forall i,j \, \, \,  |V_{ij}|:=|\<b_i|V|b_j\>|<1$}. Let $c$ denote the absolute value of the matrix element of $V$ that has the largest absolute value, i.e.,
\begin{equation}
\label{eq:c}
c:=\max_{i,j} |V_{ij}|:=|V_{i_{M}j_{M}}|<1,
\end{equation}
where $(i_M,j_M)$ denotes the indices corresponding to one of such largest elements of $V$.
Now, let's decompose a general normalized pure state $\ket{\psi}$ into the eigenstates of $A$:
\begin{equation}
\label{eq:psi}
\ket{\psi}=\sum_{k=1}^{d}\alpha_k\ket{a_k}=\sqrt{p}\ket{a_{k_M}}+\sqrt{1-p}\ket{a_{k_M}^{\perp}},
\end{equation}
where \mbox{$\max_k |\alpha_k|=|\alpha_{k_M}|:=\sqrt{p}$}, and we absorbed a phase in the definition of $\ket{a_{k_M}}$. Also,
\begin{equation*}
\ket{a_{k_M}^{\perp}}:=\frac{1}{\sqrt{1-p}}\sum_{k\neq k_M} \alpha_k |a_k\>.
\end{equation*}
Let $p_{\mathrm{max}}(\ket{\psi})$ denote the maximal element of the joint probability distribution \mbox{$\v{p}^A(\ket{\psi})\otimes\v{p}^B(\ket{\psi})$}:
\begin{equation*}
p_{\mathrm{max}}(\ket{\psi}):= \max_{k,l} p^A_k(\ket{\psi}) p^B_l(\ket{\psi})= p\max_{l} p^B_l(\ket{\psi}).
\end{equation*}
We have
\begin{eqnarray*}
p_{\mathrm{max}}(\ket{\psi})&=&\max_l p\left|\sqrt{p}\<b_l|a_{k_M}\>+\sqrt{1-p}\<b_l|a_{k_M}^\perp\>\right|^2 \\
& := & p\left|\sqrt{p}\<b_{l_M}|a_{k_M}\>+\sqrt{1-p}\<b_{l_M}|a_{k_M}^\perp\>\right|^2 \\
&= & p\left|\sqrt{p}|V_{l_M k_M}|+e^{i x } \sqrt{1-p}\sqrt{1-|V_{l_M k_M}|^2}\right|^2 \\
& \leq &p\left(\sqrt{p}|V_{l_M k_M}|+ \sqrt{1-p}\sqrt{1-|V_{l_M k_M}|^2}\right)^2,
\end{eqnarray*}
where $l_M$ is the index $l$ satisfying the first maximisation problem and \mbox{$x:= \arg \<b_{l_M}|a_{k_M}\> - \arg\<b_{l_M}|a_{k_M}^\perp\>$}. The inequality is tight only if $x=0$ and then we have
\begin{equation*}
p_{\mathrm{max}}(\ket{\psi}) \leq \frac{(1+|V_{l_M k_M}|)^2}{4} \leq \frac{(1+c)^2}{4},
\end{equation*}
where the first inequality is attained for \mbox{$p=(1+|V_{l_M k_M}|)/2$} and the second inequality is attained only if $l_M=i_M$ and $k_M=j_M$. One easily finds that the tightness of all of the above inequalities implies that
\begin{equation*}
\ket{b_{i_M}} = e^{i\phi}\left(c \ket{a_{k_M}} + \sqrt{1-c^2} \ket{a^\perp_{k_M}}\right),
\end{equation*}
where \mbox{$\phi=\arg \<a_{k_M}|b_{i_M} \>$}. We can now solve the above equation for $\ket{a^\perp_{k_M}}$ and substitute the result to Eq.~\eqref{eq:psi}. Finally using $k_M=j_M$ and optimal \mbox{$p=(1+c)/2$} one finds that states maximising $p_{\mathrm{max}}(\ket{\psi})$ are of the form
\begin{equation}\label{eq:psi_inf}
\ket{\psi_{\infty}^m}=\frac{\ket{a_{j_M}}+e^{-i\phi}\ket{b_{i_M}}}{\sqrt{2(1+c)}},
\end{equation}
where $m$ enumerates all pairs $(i_M,j_M)$ for which $|V_{ij}|$ attains maximum. It is also worth noting that states of the above form actually saturate the bound found by Landau and Pollak~\cite{landau1961prolate}, for the product of maximum outcome probabilities for non-commuting observables (see Eq. (9) of~\cite{maassen1988generalized}).

Finally, we just need to show that there exists a pure state $\ket{\psi}$ for which \mbox{$H_\alpha(\v{p}^A(\ket{\psi})\otimes\v{p}^B(\ket{\psi}))$} is smaller than for any of the states $\{\ket{\psi_{\infty}^m}\}$ for some $\alpha>0$. This can be proved in the following way. First, define \mbox{$\tilde{H}_0(\v{p})=\ln |\mathrm{supp}~\v{p}|$}, where $|\mathrm{supp}~\v{p}|$ denotes the number of non-zero elements of $\v{p}$. Then, note that the distribution \mbox{$\v{p}^A(\ket{\psi_{\infty}^m})\otimes\v{p}^B(\ket{\psi_{\infty}^m})$} has full support, so that $\tilde{H}_0$ has a value of $\ln d^2$. On the other hand, the probability distribution corresponding to any eigenstate of $A$ or $B$ has at most $d$ non-zero entries, so that $\tilde{H}_0$ for such states is smaller than or equal to $\ln d$. Finally, note that \mbox{$\lim_{\alpha\rightarrow 0}H_\alpha(\v{p})=\tilde{H}_0(\v{p})$} and $H_\alpha$ is continuous in $\alpha>0$, which means that there exists $\alpha>0$ such that $H_\alpha$ is bigger for any of $\{\ket{\psi_{\infty}^m}\}$ than for any of the eigenstates of either $A$ or $B$. 
\end{proof}

\section*{Appendix C - Proof of Observation~1}

A simple calculation shows
\begin{equation*}
H_{\alpha}(\v{p}^\epsilon) - H_{\alpha}(\v{q}^\epsilon) = \frac{\rm{sgn}(\alpha)}{1-\alpha} \ln A,
\end{equation*}
where
\begin{equation*}
A = \frac{\sum_i [r p_i d + 1-r]^\alpha}{\sum_i [r q_i d + 1-r]^\alpha}.
\end{equation*}
and $r:=1-\epsilon$. Hence $H_{\alpha}(\v{p}^\epsilon) < H_{\alpha}(\v{q}^\epsilon)$ is equivalent to $A>1$ for $\alpha < 0$ and $\alpha > 1$, whereas for $\alpha\in(0,1)$ it is equivalent to  $A < 1$. Expanding around $r =0$ one gets
\begin{equation*}
\sum_i [r x_i d + 1-r]^\alpha = d + \frac{\alpha(\alpha -1)}{2}\left(\sum_i x_i^2d^2 -d\right) r^2 + O(r^3),
\end{equation*}
where $\{x_i\}$ denotes the entries of either $\v{p}$ or $\v{q}$. Hence, for any given $\alpha$, we can rewrite $H_{\alpha}(\v{p}^\epsilon) < H_{\alpha}(\v{q}^\epsilon)$ as
\begin{equation*}
\sum_i (p_i^2 - q_i^2) d^2 r^2 + O(r^3) > 0.
\end{equation*}
Then, it is clear that if \mbox{$\sum_i p_i^2 > \sum_i q_i^2$}, i.e., \mbox{$H_2(\v{p}) < H_2(\v{q})$}, the above inequality is satisfied for $r_\alpha>0$ small enough. An analogous proof can be used to show that the statement is valid for $\alpha =0$ and $\alpha=1$. On the other hand, for a given $\alpha$ let \mbox{$H_{\alpha}(\v{p}^\epsilon) < H_{\alpha}(\v{q}^\epsilon)$} for all $r\leq r_\alpha$ with $r_\alpha >0$. Then it must be that \mbox{$\sum_i p_i^2 \geq \sum_i q_i^2$} and so \mbox{$H_2(\v{p}) \leq H_2(\v{q})$}. By assumption however \mbox{$H_2(\v{p}) \neq H_2(\v{q})$}, hence \mbox{$H_2(\v{p}) < H_2(\v{q})$}.

\section*{\texorpdfstring{Appendix D - Ordering of $H_2$ and $H_{\pm\infty}$ is unaffected by noise}{Appendix D - Ordering of $H2$ and $H_{+/-infinity}$ is unaffected by noise}}

First note that $H_2(\v{p})<H_2(\v{q})$ is equivalent to \mbox{$\sum_i p_i^2>\sum_i q_i^2$}. Recall that we denote by $p^\epsilon_i$ the elements of \mbox{$\v{p}^\epsilon = (1-\epsilon)\v{p} + \epsilon \v{\eta}$}. Then, introducing $r:=1-\epsilon$, we have
\begin{equation*}
\sum_i \left(p^\epsilon_i\right)^2=\sum_i \left(\frac{1-r}{d}+r p_i\right)^2=\frac{1-r^2}{d}+r^2\sum_i p_i^2.
\end{equation*}
It is then immediate to see that \mbox{$H_2(\v{p})<H_2(\v{q})
\Leftrightarrow H_2(\v{p}^\epsilon)<H_2(\v{q}^\epsilon)$} for all $r \in (0,1]$, i.e., for every $\epsilon \in [0,1)$. 

Similarly note that $H_{\infty}(\v{p})<H_{\infty}(\v{q})$ is equivalent to $\max_i p_i>\max_i q_i$. We will thus consider
\begin{equation*}
\max_i p^\epsilon_i=\max_i \left(\frac{1-r}{d}+r p_i\right)=\frac{1-r}{d}+r\max_ip_i.
\end{equation*}
Again, it is easy to see that \mbox{$H_\infty(\v{p})< H_\infty(\v{q})$} is equivalent to \mbox{$H_\infty(\v{p}^\epsilon)< H_\infty(\v{q}^\epsilon)$} for all $\epsilon \in [0,1)$. An analogous reasoning works for $H_{-\infty}$. Moreover, from Observation~\ref{lemma:ordering_by_2}, no other $H_\alpha$ has this property.

\section*{Appendix E - Proof of Observation~2}

\begin{proof}
Introducing \mbox{$r:=1-\epsilon$} one can compute
\begin{equation*}
\v{p}^A(\rho^\epsilon) \otimes \v{p}^B(\rho^\epsilon) = \frac{(1-2r)}{d^2} + 2r\v{P}^{AB}(\rho) + r^2 \v{Q}^{AB}(\rho),
\end{equation*}
where 
\begin{eqnarray*}
\v{P}^{AB}(\rho)&=&\frac{\v{p}^A(\rho)  + \v{p}^B(\rho)}{2d},\\
\v{Q}_{ij}^{AB}(\rho)&=&\v{p}_i^A(\rho) \v{p}_j^B(\rho) - \frac{\v{p}_i^A(\rho) + \v{p}_j^B(\rho)}{d} + \frac{1}{d^2}.
\end{eqnarray*}
Notice that the same expression holds for $\sigma^\epsilon$. We now proceed as in Appendix~C. We have \mbox{$\Delta H_\alpha = \rm{sgn}(\alpha)\ln B/(1-\alpha)$} with \small
\begin{equation*}
B = \frac{\sum_{ij} [2r \v{P}_{ij}^{AB}(\rho) d^2 + 1-2r + r^2d^2 Q^{AB}_{ij}(\rho)]^\alpha}{\sum_{ij} [2r \v{P}_{ij}^{AB}(\sigma) d^2 + 1-2r + r^2d^2 Q^{AB}_{ij}(\sigma)]^\alpha}:=\frac{g_\alpha(\rho)}{g_\alpha(\sigma)}.
\end{equation*}\normalsize
Hence $\Delta H_{\alpha}\leq 0$ is equivalent to $B>1$ for $\alpha < 0$ and $\alpha > 1$, whereas for $\alpha\in(0,1)$ it is equivalent to  $B < 1$. Expanding around $r =0$ one gets
\begin{equation*}
g_\alpha(\psi) = d^2 + 2\alpha(\alpha -1)\left(\sum_{ij} \v{P}_{ij}^{AB}(\rho)^2 d^4 -d^2\right)r^2 + O(r^3).
\end{equation*}
Therefore, for any $\alpha$ we can rewrite $\Delta H_\alpha < 0$ as
\begin{equation}
\label{eq:equivalent}
\sum_{ij} (\v{P}_{ij}^{AB}(\rho)^2 - \v{P}_{ij}^{AB}(\sigma)^2)r^ 2 d^4 + O(r^3) > 0.
\end{equation}
Let us fix $\alpha$. If Condition~\ref{condition2obs} holds, then \mbox{$\sum_{ij} \v{P}_{ij}^{AB}(\rho)^2 > \sum_{ij} \v{P}_{ij}^{AB}(\sigma)^2$}. Hence, there exists $r_\alpha$ small enough
such that Eq.~\eqref{eq:equivalent} is satisfied, i.e., Condition~\ref{condition1obs} holds. On the other hand, if for a given $\alpha$ Condition~\ref{condition1obs} holds for all $\epsilon \geq \epsilon_{\alpha}$ (i.e., for all $r \leq r_{\alpha}$) then one must have $\sum_{ij} \v{P}_{ij}^{AB}(\rho)^2 \geq \sum_{ij} \v{P}_{ij}^{AB}(\sigma)^2$, which is equivalent to
\begin{equation}
\nonumber
e^{-H_{2}(\v{p}^A(\rho))} + e^{-H_2(\v{p}^B(\rho))} \geq e^{-H_{2}(\v{p}^A(\sigma))} + e^{-H_2(\v{p}^B(\sigma))}.
\end{equation}
However, by assumption the equality does not hold, so we obtain Condition~\ref{condition2obs}.
\end{proof}

\section*{Appendix F - Proof of Lemma~1}

Note that the following proof uses the notation and results obtained while proving Theorem~\ref{theorem:noPureUMUS} in Appendix~B.
\begin{proof}
We want to show that among $\epsilon$-pseudo pure states $\rho^\epsilon_\psi$ defined in Eq.~\eqref{eq:noisy_state}, the ones with $\ket{\psi}=\ket{\psi^m_\infty}$ for some $m$ (see Eq.~\eqref{eq:psi_inf}) are those minimising the quantity \mbox{$H_\infty(\v{p}^A\otimes\v{p}^B)$}. Instead of minimising this entropic quantity we can equivalently maximise over all pure states $\ket{\psi}$ the maximal element $p_{\max}(\rho^\epsilon_\psi)$ of the probability vector \mbox{$\v{p}^A(\rho^\epsilon_\psi)\otimes\v{p}^B(\rho^\epsilon_\psi)$}. We have
\begin{eqnarray*}
p_{\max}(\rho^\epsilon_\psi)&=&\frac{\epsilon^2}{d^2}+\frac{\epsilon(1-\epsilon)}{d}\max_{k,l} \left(p^A_i(\ket{\psi})+p^B_j(\ket{\psi})\right)\\
&+&(1-\epsilon)^2\max_{k,l}\left(p^A_i(\ket{\psi})p^B_j(\ket{\psi}\right).
\end{eqnarray*}
From the proof of Theorem~\ref{theorem:noPureUMUS} (see Appendix~B) we know that the last term is maximised for $\ket{\psi}=\ket{\psi_\infty^m}$. We will now show that the second term is also maximised for the same state and therefore the whole expression for \mbox{$p_{\max}(\rho^\epsilon_\psi)$} is maximised for this choice of $\ket{\psi}$. 

To shorten the notation let us introduce
\begin{equation*}
s_{\max}(\ket{\psi})=\max_{k,l} \left(p^A_k(\ket{\psi})+ p^B_l(\ket{\psi})\right).
\end{equation*}
Using the same reasoning that lead us in the proof of Theorem~\ref{theorem:noPureUMUS} to the bound on $p_{\max}(\ket{\psi})$, we can obtain a bound on $s_{\max}(\ket{\psi})$. More precisely we have
\begin{equation*}
s_{\mathrm{max}}(\ket{\psi}) \leq p+\left(\sqrt{p}|V_{l_M k_M}|+ \sqrt{1-p}\sqrt{1-|V_{l_M k_M}|^2}\right)^2,
\end{equation*}
where we use the same notation as in Appendix~B. It is straighforward to show that the above expression is maximised for $|V_{l_M k_M}|=c$ and $p=(1+c)/2$. Similarly as in the proof of Theorem~\ref{theorem:noPureUMUS}, this leads to the conclusion that $s_{\mathrm{max}}(\ket{\psi})$ is maximised by the states $\ket{\psi_\infty^m}$.
\end{proof}

\section*{Appendix G - Universal MUS for qubit systems}

\subsection*{Setting the scene}

The Bloch sphere can be parametrized so that \mbox{$\v{a}=(0,0,1)$} and \mbox{$\v{b}=(\sin\gamma,0,\cos\gamma)$}, with $\gamma \in (0,\pi/2)$. Now, according to Lemma~\ref{lemma:Hinf}, if there exists $\epsilon$-noisy universal MUS it will be described by
\begin{equation}
\label{eq:qubit_uMUS}
\rho^\epsilon_{\gamma/2}=\frac{\iden+\v{r}\cdot\v{\sigma}}{2},\quad \v{r}=\pm|r| \left(\sin\frac{\gamma}{2},0,\cos\frac{\gamma}{2}\right),
\end{equation}
with $|r|=(1-\epsilon)$, i.e., its Bloch vector $\v{r}$ will lie in the middle between $\pm\v{a}$ and $\pm\v{b}$, and its length will differ from identity by the amount of noise $\epsilon$. 

We will first prove that there is no universal MUS for qubit systems when we choose $\mathcal{F}=\mathcal{S}$, i.e., within the framework of majorisation uncertainty relations. Let $\rho^\epsilon_0$ be a state described by Bloch vector \mbox{$\v{r}'=(0,0,|r|)$}. Then by direct calculation one can check that for any given $\epsilon$ the distribution \mbox{$\v{p}^A(\rho^\epsilon_0)\otimes \v{p}^B(\rho^\epsilon_0)$} is not majorised by \mbox{$\v{p}^A(\rho^\epsilon_{\gamma/2})\otimes \v{p}^B(\rho^\epsilon_{\gamma/2})$}. 

Now, in order to prove that there exists $0<\epsilon <1$ such that $\rho^\epsilon_{\gamma/2}$ is a universal MUS within $\B^\epsilon_2$ for $\mathcal{F}=\mathcal{U}$ and $\mathcal{F}=\mathcal{U}_+$, we must show that all R{\'e}nyi entropies of \mbox{$\v{p}^A(\rho^\epsilon_{\gamma/2})\otimes\v{p}^B(\rho^\epsilon_{\gamma/2})$} are smaller than for any other state $\rho^\epsilon\in\B_2^\epsilon$ that can generally be described by Bloch vector $\v{q}$,
\begin{equation}\small
\label{eq:Bloch_state}
\rho^\epsilon=\frac{\iden+\v{q}\cdot\v{\sigma}}{2},\quad\v{q}=|q|(\sin\theta\cos\phi,\sin\theta\sin\phi,\cos\theta),
\end{equation}\normalsize
with $|q|\leq(1-\epsilon)$. As we will show in the next subsection it is actually sufficient to restrict the comparison to states described by
Bloch vectors
\begin{equation}
\label{eq:simplified_bloch}
\v{q}_\theta=(1-\epsilon)(\sin\theta,0,\cos\theta),\quad \theta\in[0,\gamma],
\end{equation}
i.e., one may assume $|q|=(1-\epsilon)$, $\theta\in[0,\gamma]$ and $\phi=0$ in Eq.~\eqref{eq:Bloch_state} (note that this corresponds to Bloch vectors lying between $\pm\v{a}$ and $\pm\v{b}$). Thus, the existence of a universal MUS in $\B^\epsilon_2$ can be proved by showing that among the distributions \mbox{$\v{p}^{AB}:=(p^A_1,1-p^A_1)\otimes(p^B_1,1-p^B_1)$} with
\begin{equation}\small
\label{eq:probset}
p^A_1=\frac{1+(1-\epsilon)\cos\theta}{2},\quad p^B_1=\frac{1+(1-\epsilon)\cos(\gamma-\theta)}{2},
\end{equation}\normalsize
the one with $\theta=\gamma/2$ minimises all R{\'e}nyi entropies. As our main goal is to prove that universal MUS can exist for $\mathcal{F}=\mathcal{U}$ and $\mathcal{F}=\mathcal{U}_+$, we will just focus on the particular choice of $\gamma=\pi/4$. In the last subsection of this appendix we will prove that for a noise level $\epsilon=1/2$ (Bloch vector length $1/2$) the above condition holds, so that a state specified by Eq.~\eqref{eq:qubit_uMUS} is in fact the $\epsilon$-noisy universal MUS.

\subsection*{Simplifying the set of states}

The way to prove that we can restrict the comparison to states $\rho^\epsilon_\theta$ described by Bloch vectors $\v{q}_\theta$ specified by Eq.~\eqref{eq:simplified_bloch} is to show that for every state $\rho^\epsilon\in\B_2^\epsilon$ there exists an $\epsilon$-pseudo-state $\rho^\epsilon_\theta$ such that
\begin{equation}
\label{eq:majorization}
\v{p}^A(\rho^\epsilon_\theta)\otimes\v{p}^B(\rho^\epsilon_\theta)\succ\v{p}^A(\rho^\epsilon)\otimes\v{p}^B(\rho^\epsilon).
\end{equation}
Since majorization implies trumping relation, it means that for any state in $\B_2^\epsilon$ there exists a state $\rho^\epsilon_\theta$ for which all R{\'e}nyi entropies $H_\alpha$ are lower. Hence, if  $\rho^\epsilon_{\gamma/2}$ minimises all $H_\alpha$ among $\rho^\epsilon_\theta$ states, all the remaining states in $\B^\epsilon_2$ must necessarily have higher $H_\alpha$ for all $\alpha$.

Using the transitivity of majorization we will prove our claim by restricting the subset $\B_2^\epsilon$ in a few steps, each time removing states that are ``majorized'' by states in the remaining subset. First note that for any state $\rho^\epsilon$ with \mbox{$|q|<(1-\epsilon)$} there exists a state $\tau^\epsilon$ with \mbox{$|q|=(1-\epsilon)$} and the same \mbox{$(\theta,\phi)$}, such that \mbox{$\v{p}^A(\tau^\epsilon)\succ\v{p}^A(\rho^\epsilon)$} and \mbox{$\v{p}^B(\tau^\epsilon)\succ\v{p}^B(\rho^\epsilon)$}, so that \mbox{$\v{p}^A(\tau^\epsilon)\otimes \v{p}^B(\tau^\epsilon) \succ\v{p}^A(\rho^\epsilon)\otimes \v{p}^B(\rho^\epsilon)$}. Hence we can restrict the states that we need to compare $\sigma^\epsilon$ with to states having a Bloch vector length $|q|=(1-\epsilon)$. Next we note that states with fixed $|q|$ and $\theta$ have a fixed $\v{p}^A$, but $\v{p}^B$ that depends on $\phi$. Moreover, $\v{p}^B$ for $\phi=0$ or $\phi=\pi$ majorizes all other $\v{p}^B$ with different $\phi$. Hence we can restrict the considered set of states to the ones that lie in the plane spanned by $\v{a}$ and $\v{b}$. Additionally, due to the symmetry of the problem, we only need to look at \mbox{$\theta\in[0,\pi]$} and $\gamma \in (0,\pi/2)$ (we exclude $\gamma=0$ as it is a trivial case and $\gamma=\pi/2$ because Theorem~\ref{lemma:mub} holds).

Finally we need to show that for any state $\rho^\epsilon$ that lies in this plane, has \mbox{$|q|=(1-\epsilon)$} and \mbox{$\theta\in[0,\pi]$} there exists $\rho^\epsilon_\theta$ also in that plane and with the same length of the Bloch vector, but with $\theta\in[0,\gamma]$, such that Eq.~\eqref{eq:majorization} holds. It is straightforward to check that for a state with \mbox{$\theta=\gamma$} both $\v{p}^A$ and $\v{p}^B$ majorize the corresponding distributions obtained for a state described by \mbox{$\theta\in(\gamma,\pi/2]$}. Similarly the probability distribution with \mbox{$\theta=0$} majorizes the ones described by $\theta\in[\gamma+\pi/2,\pi]$. The only thing left is to show that for every state described by \mbox{$\theta\in[\pi/2,\pi/2+\gamma]$} there is a state with \mbox{$\theta\in[0,\gamma]$} such that Eq.~\eqref{eq:majorization} holds. One can achieve this by mapping $\theta$ of every state from the first set to \mbox{$\theta-\pi/2$} in the second set. This ends the proof.

\subsection*{Proving the existence of universal MUS}

As already announced, here we will prove that among distributions \mbox{$\v{p}^{AB}:=(p^A_1,1-p^A_1)\otimes(p^B_1,1-p^B_1)$}, specified by Eq.~\eqref{eq:probset} with $\gamma=\pi/4$ and $\epsilon=1/2$, the one with $\theta=\pi/8$ minimises R{\'e}nyi entropies for all $\alpha$. In order to simplify the calculations we use a slightly different parametrization. Namely, we perform a substitution $\theta\rightarrow \theta-\pi/8$ and, due to symmetry, we only consider $\theta\in[0,\pi/8]$ (hence $\theta$, instead of measuring the angle from the $z$ axis, measures the angle from the state we want to prove is a universal MUS). To shorten the notation let us also introduce
\begin{equation*}
t_{\pm}^{\pm}=\left[2\pm\cos\left(\frac{\pi}{8}\pm\theta\right)\right]^{\alpha-1},
\end{equation*}
where the subscript refers to the sign $\pm$ in front of $\theta$.

It is straightforward to show that independently of $\alpha$ the R{\'e}nyi entropy $H_\alpha$ of the distribution $\v{p}^{AB}$ has an extremum for $\theta=0$. However, we need to show that this is the only extremum and that it is actually a minimum. Once we prove the former, the latter can be easily verified by checking that $H_\alpha$ of the distribution at the extremum $\theta=0$ is smaller than at the edge of the region $\theta=\pi/8$. To prove the uniqueness of the extremum we will show that $\frac{\partial}{\partial\theta}H_{\alpha}(\v{p}^{AB})=0$ has only a single solution for $\theta=0$. Unless $\alpha=0$ or $\alpha=1$ (which will be handled separately) vanishing of this derivative is equivalent to
\begin{equation*}
\zeta:=A \sin{\theta}+B \cos{\theta}=0,
\end{equation*}
where
\begin{eqnarray*}
A&=&\left(2 \cos\frac{\pi}{8} -\cos\theta\right) t^-_-t^-_+ -\left(2 \cos\frac{\pi}{8} +\cos\theta\right) t^+_- t^+_+,\\
B&=&\left(2 \sin\frac{\pi}{8} +\sin\theta\right) t^+_-t^-_+ -\left(2 \sin\frac{\pi}{8} -\sin\theta\right) t^-_- t^+_+.
\end{eqnarray*}
The proof consists of two main parts. First we will show that for all $\alpha\geq 2$ and $\theta\in(0,\pi/8]$ we have $\zeta<0$. We achieve this by finding a function $\zeta '$ that upper-bounds $\zeta$ in the considered parameter region and proving that it is negative. Next, we will prove that for all $\alpha\leq -3$ and $\theta\in(0,\pi/8]$ we have $\zeta>0$, this time by finding a function $\zeta '$ that lower-bounds $\zeta$ and showing that it is always positive. Finally, in the remaining region $\alpha\in[-3,2]$ the non-vanishing of $\zeta$ can be easily verified numerically (with $\alpha=0$ and $\alpha=1$ considered separately).

Let us start with $\alpha\geq 2$. For $\theta\in(0,\pi/8]$ we have that a function $\zeta '$ obtained by exchanging $A$ with \mbox{$-2 t^+_-t^+_+ \cos\theta $} upper-bounds $\zeta$ (this is because $t^+_+ t^+_-\geq t^-_- t^-_+$). To show that $\zeta '$ is always negative we divide it by the positive quantity \mbox{$t^+_-t^+_+ \cos\theta$}, and show that the obtained expression $\zeta_1+\zeta_2$ is always negative, where
\begin{eqnarray*}
\zeta_1&=&-\sin\theta+s_+\sin\theta,\\
\zeta_2&=&-\sin\theta+2s_-\sin\frac{\pi}{8},
\end{eqnarray*}
and
\begin{equation*}
s_{\pm}=\frac{t^-_+}{t^+_+}\pm \frac{t^-_-}{t^+_-}.
\end{equation*}
Now, using the fact that $(a+b)^x\geq a^x+b^x$ for $a,b>0$ and $x \geq 1$ one can easily show that $s_+<1$, which results in $\zeta_1<0$. In order to show that also $\zeta_2$ is negative it is sufficient to prove that $\zeta_2$ is a monotonically decreasing function with $\theta$ (since for $\theta=0$ it vanishes). This can be shown by upper-bounding terms dependent on \mbox{$\theta\in(0,\pi/8]$} in the expression for $\frac{\partial\zeta_2}{\partial\theta}$ that leads to
\begin{equation*}
\frac{\partial\zeta_2}{\partial\theta}\leq  -\cos\frac{\pi}{8} + \frac{8\sqrt{2}\sin\frac{\pi}{8}}{(\cos\frac{\pi}{8}-2)^4}\frac{\alpha-1}{2^\alpha}.
\end{equation*}
The above expression is maximised for \mbox{$\alpha=(1+\ln 2)/\ln 2$} and it is then negative, so that $\zeta_2$ is negative for $\theta\in(0,\pi/8]$. This ends the first part of the proof.

We now turn to the case when $\alpha\leq -3$. For \mbox{$\theta\in(0,\pi/8]$} we then have that a function $\zeta '$ obtained by exchanging the second term in the expression for $A$ by $-\left(2 \cos\frac{\pi}{8} +\cos\theta\right) t^+_-t^-_+$ lower-bounds $\zeta$ (this is because $t^+_+ \leq  t^-_+$). Moreover, substituting 1 for all $\cos\theta$ will also lower-bound the expression for $\zeta$, as for $\alpha\leq-3$ we have $B\leq 0$. We further lower-bound the expression by performing a sequence of divisions and subtractions of positive numbers: first dividing by $t^-_-t^-_+$, then subtracting $t_+^+ \sin\theta/t^-_+$ and finally dividing again by $2\sin\pi/8$. This leaves us with the lower-bound of the form:
\begin{equation*}
\sin\theta \frac{2\cos\frac{\pi}{8}\left(1-\frac{t^+_-}{t^-_-}\right)-1}{2\sin\frac{\pi}{8}}+\left(\frac{t^+_-}{t^-_-}-\frac{t^+_+}{t^-_+}\right).
\end{equation*}
As the term standing by $\sin\theta$ is a monotonically decreasing function of $\alpha$ it achieves minimum at the edge of the considered parameter space, i.e., for $\alpha=-3$. It is then straighforward to verify that it is always bigger than 1, so we can actually lower-bound $\zeta$ with
\begin{equation*}
\sin\theta+\left(\frac{t^+_-}{t^-_-}-\frac{t^+_+}{t^-_+}\right).
\end{equation*}
To show that the above equation is always positive for $\alpha\leq -3$ one can equivalently show that $\sin\theta-s_-$ is positive for $\alpha\geq 3$. This can be achieved using a method analogous to the one used to show that $\zeta_2<0$. Thus, the function lower-bounding $\zeta$ for $\alpha\leq-3$ is always positive unless $\theta=0$ and so is $\zeta$ itself. This completes the second part of the proof.

\bibliography{Bibliography_uncertainty}

\end{document}